\newtheorem{proposition}{Proposition}
\newtheorem{definition}{Definition}
\newtheorem{remark}{Remark}
\begin{document}

\title{DRL-based Resource Allocation in Remote State Estimation}

\author{Gaoyang Pang, Wanchun Liu*,~\IEEEmembership{Member,~IEEE,} Yonghui Li,~\IEEEmembership{Fellow,~IEEE,} Branka Vucetic,~\IEEEmembership{Life Fellow,~IEEE,} 
\thanks{Part of the work has been submitted to IEEE Globecom 2022 \cite{Pang2022DRL}.}
\thanks{G. Pang, W. Liu, Y. Li, and B. Vucetic are with the School of Electrical and Information Engineering, The University of Sydney, Sydney, NSW 2006, Australia (e-mail: gaoyang.pang@sydney.edu.au; wanchun.liu@sydney.edu.au; yonghui.li@sydney.edu.au; branka.vucetic@sydney.edu.au). \textit{(W. Liu is the corresponding author.)}}} 

\maketitle
\vspace{-1.5cm}
\begin{abstract}
Remote state estimation, where sensors send their measurements of distributed dynamic plants to a remote estimator over shared wireless resources, is essential for mission-critical applications of Industry 4.0. Existing algorithms on dynamic radio resource allocation for remote estimation systems assumed oversimplified wireless communications models and can only work for small-scale settings. In this work, we consider remote estimation systems with practical wireless models over the orthogonal multiple-access and non-orthogonal multiple-access schemes. We derive necessary and sufficient conditions under which remote estimation systems can be stabilized. The conditions are described in terms of the transmission power budget, channel statistics, and plants’ parameters. For each multiple-access scheme, we formulate a novel dynamic resource allocation problem as a decision-making problem for achieving the minimum overall long-term average estimation mean-square error. Both the estimation quality and the channel quality states are taken into account for decision making. We systematically investigated the problems under different multiple-access schemes with large discrete, hybrid discrete-and-continuous, and continuous action spaces, respectively. We propose novel action-space compression methods and develop advanced deep reinforcement learning algorithms to solve the problems. Numerical results show that our algorithms solve the resource allocation problems effectively and provide much better scalability than the literature.
\end{abstract}
\vspace{-0.5cm}
\begin{IEEEkeywords}
Remote state estimation, radio resource allocation, non-orthogonal multiple access, deep reinforcement learning, task-oriented communications.
\end{IEEEkeywords}
\vspace{-0.5cm}
\section{Introduction} \label{sec:intro}
Wireless networked control systems (WNCSs), consisting
of spatially distributed plants, sensors, machines, actuators
and controllers, play an essential role in the era of Industry
4.0~\cite{Park2018WNCS}. In particular, remote state estimators for monitoring
dynamic plant status in a real-time manner are critical in
WNCSs to enable high-quality closed-loop control. In Industry
4.0, massive wireless sensors are deployed for remote state
estimation of spatially distributed plants.
This raises fundamental questions:
\begin{enumerate}
    \item What are the primary conditions on wireless radio resources to enable stable remote estimation of all plants?
    \item How to manage the limited wireless radio resources for remote state estimation of many distributed plants to achieve the optimal estimation performance under different multiple-access schemes?
\end{enumerate}

Existing works on wireless resource allocation mainly focused on data-oriented communications, and the design targets are transmission throughput, latency, and reliability \cite{Uysal2021Semantic}. Advanced data-driven machine learning approaches, such as supervised learning and reinforcement learning, have been adopted when resource allocation problems cannot be solved effectively by conventional model-based methods \cite{Zappone2019DRLoverview}. Different from data-oriented communications, the resource allocation design in a remote estimation system should be task-oriented as the ultimate goal is for minimizing the long-term average remote estimation mean-square error (MSE) of the dynamic plant states \cite{Uysal2021Semantic}. Furthermore, it is essential for a remote estimator to determine the conditions on the key parameters of wireless resources, including the transmit power budget and the channel statistics, under which the estimation process can be stabilized, i.e., the expected estimation error does not increase unboundedly as time passes by \cite{Liu2021FSMC,Liu2021Polyanskiy}. Unstable estimation will lead to unstable control, introducing catastrophic impact in real-world systems. Related works on dynamic resource allocation of remote estimation systems are summarized as below.

\subsection{Related Work}
Many existing works on remote state estimation worked on the optimal estimation algorithm design \cite{Schenato2008WNCS} and transmission control for single-plant systems~\cite{Wu2017OMA,Huang2020MDP,Wu2020OMA}. In particular, Wu et al. \cite{Wu2017OMA} considered a transmit power control problem for minimizing the instantaneous estimation MSE. The problem was solved via convex optimization.
Huang et al. \cite{Huang2020MDP} worked on a retransmission control problem of a remote estimator with hybrid automatic repeat request protocols. In order to optimize the long-term  average estimation MSE, the problem is generally formulated into a sequential decision-making problem associated with a Markov decision process (MDP). Wu et al. \cite{Wu2020OMA} further investigated a dynamic transmission scheduling problem under the constraints of communication cost. The problem was formulated as an MDP and solved by a classical reinforcement learning approach, i.e., Q-learning algorithm. Recently, wireless remote estimation for multi-plant systems has attracted a lot of attention from the control system society during the past decade, mostly focusing on transmission scheduling over limited wireless channels. 
These works can be categorized into orthogonal and non-orthogonal multiple-access (OMA and NOMA) schemes.

\textbf{OMA-based Multi-Sensor-Multi-Channel Remote Estimation:} 
each frequency channel (i.e., a subcarrier) can only be assigned to a single sensor to avoid inter-user interference completely. 
Wu et al.~\cite{Wu2018SmallAction} investigated the sensor scheduling problem, which was formulated as a multi-dimensional state and action space MDP problem to be solved by a value-iteration algorithm.
However, only problems with small-scale setups, e.g., a two-sensor-one-channel system, can be solved effectively by the classical model-based algorithm.
Leong et al. \cite{Leong2020OMA}, Demirel et al. \cite{Demirel2018OMA}, and Yang et al. \cite{Yang2022OMA} adopted deep Q-network (DQN), a data-driven deep reinforcement learning (DRL) algorithm, to solve similar scheduling problems in a relatively larger scale.
More recently, Huang et al. \cite{liu2021deep} developed an action-space reducing method of a multi-sensor-multi-actuator scheduling problem for enhancing the training efficiency.

\textbf{NOMA-based Multi-System Remote Estimation:}  multiple sensors can transmit packets at the same channel simultaneously, and a receiver detects each of the sensor packets by processing the received superimposed signals \cite{Saito2013BasicNOMA}.
Successive Interference Cancellation (SIC) is a commonly adopted scheme for decoding multiple packets arrived simultaneously.
Pezzutto et al. worked on a channel assignment problem \cite{Pezzutto2021NOMA} and a power allocation problem \cite{Pezzutto2022NOMA} for a multi-sensor-single-channel remote estimation system with NOMA schemes. Classical dynamic programming algorithms were discussed to solve the long-term average MSE minimization problem. The follow-up work \cite{Forootani2022NOMA} adopted approximate dynamic programming for reducing the computation complexity.
Li et al. \cite{Li2019NOMA2} formulated the multi-sensor transmission power control problem into an MDP and a Markov game, and developed Q-learning-based solutions.

\textbf{Stability Conditions.} The necessary and sufficient stability condition for a single-sensor-single-channel remote estimator over a static channel (i.e., fixed packet error rate) was established in terms of the packet error rate and the plant parameters \cite{Schenato2007Stability}.
Sufficient stability conditions of multi-sensor-multi-channel remote estimators over static and binary-state Markov channels were established in~\cite{Wu2018SmallAction} and \cite{Leong2020OMA}, respectively. 
More recently, we derived the necessary and sufficient stability condition of a single system  over Markov fading channels~\cite{Liu2021FSMC}.
Our follow-up work developed the stability condition for multi-sensor remote estimators assuming that the channel conditions are identical for different sensors \cite{Liu2022Stability1}.

\subsection{Limitations and Challenges}
The main limitations of existing works on resource allocation in remote estimation systems are summarized below.

\textbf{1) Small-scale systems only.} 
The existing algorithms for computing the optimal dynamic radio resource allocation policy only work for small-scale settings \cite{Wu2018SmallAction,Leong2020OMA,Demirel2018OMA,Yang2022OMA,liu2021deep,Pezzutto2021NOMA,Pezzutto2022NOMA,Forootani2022NOMA,Li2019NOMA2}, among which \cite{Leong2020OMA} considered the largest system with six sensors sharing three channels. This is because even a small system can create large state and action spaces due to the curse of dimensionality, requiring huge computation and storage resources to find an optimal policy.
In particular, the NOMA-based remote estimation works only considered a narrow-band setup, i.e., only one frequency channel is available for multi-sensor transmissions.
Finding the optimal dynamic resource allocation policy of a many-sensor-many-channel remote estimation system is challenging. 

\textbf{2) Power allocation only or channel assignment only policy.} 
The existing radio resource allocations problems in remote estimation systems either considered sensor transmit power allocation \cite{Pezzutto2022NOMA,Li2019NOMA2} or channel assignment \cite{Wu2018SmallAction,Leong2020OMA,Demirel2018OMA,Yang2022OMA,liu2021deep,Pezzutto2021NOMA,Forootani2022NOMA}. However, for NOMA-based systems, it is ideal to jointly optimize both channel assignment and power control policies.
The joint design problems can have large and hybrid action spaces, i.e., discrete channel assignment and continuous power allocation, introducing higher computation complexity in finding the optimal policies.

\textbf{3) Resource allocation without utilizing channel state information.} Existing algorithms have used instantaneous estimation quality indicators rather than channel state information (CSI) for generating resource allocation action in each time step~\cite{Forootani2022NOMA,Pezzutto2021NOMA,Pezzutto2022NOMA}.
Considering fading channel states in practice, it is necessary to utilize CSI for resource allocation.
However, using both estimation quality indicators and CSI for decision making induces a larger state space and thus higher complexity in the resource allocation policy optimization.

\textbf{4) Oversimplified channel and inaccurate packet dropout models.} 
For ease of tractability, most existing works on remote estimation considered static channels \cite{Yang2022OMA} or simplified binary-state (i.e., on-off) Markov channels \cite{Leong2020OMA}, and assumed channel conditions of different sensors are identical \cite{Leong2020OMA,Yang2022OMA,Liu2022Stability1}.
In addition, the packet error rates (i.e., reliability) were calculated assuming either a coding-free scheme or an infinite blocklength coding scheme, and were approximated using symbol error rate \cite{Li2019NOMA2} or Shannon capacity based formulas \cite{Forootani2022NOMA,Pezzutto2021NOMA}, respectively.
However, the sensor measurements are usually coded as short packages in industrial applications. 
The resource allocation algorithms obtained based on the oversimplified channel and inaccurate packet dropout models can be far from optimal in real-world systems.

\subsection{Contributions}
We systematically investigate the radio resource allocation problems of multi-sensor-multi-channel remote estimation systems over OMA and NOMA schemes. The novel contributions are summarized as below. The comparison of key features between this work and the related works in remote state estimation is summarized in Table~\ref{tab:comparison}.

\begin{table}[t]
\footnotesize
\centering
\setlength\tabcolsep{0.5pt}
\caption{Comparison of Key Features Between Existing Works in Remote Estimation and This Work}
\vspace{-0.5cm}
\begin{tabular}{cccccc}
\hline\hline
Methods & Problems & Access Scheme & Packet Dropout Models & \begin{tabular}[c]{@{}c@{}}System Scale\\ (User \#, Channel \#)\end{tabular} & Ref. \\ \hline
Dynamic programming & Channel assignment & OMA & Binary-state Markov channel & (2, 1) & \cite{Wei2021OMA} \\
\rowcolor[HTML]{EFEFEF} 
Dynamic programming & Channel assignment & NOMA & Infinite blocklength coding scheme & (2, 1) & \cite{Pezzutto2021NOMA} \\
Dynamic programming & Power allocation & NOMA & Infinite blocklength coding scheme & (2, 1) & \cite{Pezzutto2022NOMA} \\
\rowcolor[HTML]{EFEFEF} 
Dynamic programming & Channel assignment & NOMA & Infinite blocklength coding scheme & (5, 1) & \cite{Forootani2022NOMA} \\
DQN & Channel assignment & OMA & Binary-state Markov channel & (6, 3) & \cite{Leong2020OMA} \\
\rowcolor[HTML]{EFEFEF} 
DQN & Channel assignment & OMA & Error-free & (6, 3) &  \cite{Demirel2018OMA} \\
DQN & Channel assignment & OMA & Static channel & (5, 3) & \cite{Yang2022OMA} \\
\rowcolor[HTML]{EFEFEF} 
DQN & Channel assignment & OMA & Static channel & (3, 3) & \cite{liu2021deep} \\
Q learning & Power allocation & NOMA & Coding-free scheme & (3, 1) & \cite{Li2019NOMA2} \\
\rowcolor[HTML]{EFEFEF} 
\textbf{DRL} & \textbf{Joint optimization} & \textbf{OMA\&NOMA} & \textbf{Finite blocklength coding scheme} &  \textbf{(50, 25)} & \textbf{Our work} \\ \hline\hline
\end{tabular}
\label{tab:comparison}
\vspace{-1.0cm}
\end{table}

\textbf{1) New models of remote state estimation with OMA and NOMA.} 
We introduce practical short-packet transmissions and multi-state Markov fading channels in remote estimation systems for the first time. In addition to conventional OMA and multi-sensor-single-channel NOMA (where each sensor can use at most one channel), referred to as  Scenarios 1 and 2, we propose a novel multi-sensor-multi-channel NOMA scheme, referred to as  Scenario 3, with multi-round interference rejection combining (IRC) and SIC at the remote estimator to decode multiple sensor packets.
The resource allocations in Scenarios 1 and 3 require channel assignment only and power allocation only, introducing discrete and continuous actions, respectively. Scenario 2 needs hybrid discrete and continuous actions.
We comprehensively investigate the above scenarios including the decoding and resource allocation algorithm complexities.

\textbf{2) The necessary and sufficiency conditions to stabilize the remote estimation systems.}
We develop a necessary and a sufficient stability condition of the multi-sensor-multi-channel remote estimation system. Once the sufficient condition is satisfied, there exists a resource allocation policy that can stabilize the remote estimator in any of Scenarios 1-3.
If the necessary condition does not hold, there is no resource allocation policy that can stabilize the remote estimator regardless of multiple-access schemes.
We note that the stability conditions of remote estimators under the assumption of nonidentical sensors' channel states have not been discussed in the literature.

\textbf{3) Novel problem formulation with both estimation and channel quality states.} 
We formulate resource allocation problems in Scenarios 1-3 over fading channels as  Markov decision process
(MDP) problems, which take into account both estimation quality states and channel quality states for decision making.
To the best of our knowledge, such a problem has not been investigated in the literature.
The problems with large state and action spaces can not be solved by existing model-based algorithms, especially when the system scale is large.

\textbf{4) Advanced DRL algorithms for resource allocation with large state and action spaces.} 
We develop advanced data-driven DRL algorithms that generate low-dimensional continuous virtual actions. We then propose novel action mapping schemes to map virtual actions into real actions for resource allocation in Scenarios 1-3  for effectively solving the radio resource allocation problems.
Extensive simulation results illustrate that the proposed DRL algorithm can effectively solve the dynamic resource allocation problems with a much larger scale than the literature, and provides significant performance gain compared with some benchmark policies, especially in
large systems.
Furthermore, the optimized resource allocation policy with the proposed NOMA scheme can provide a significant performance gain compared to the conventional schemes.

\textbf{Outline:} The system model of the proposed remote estimation system in Scenarios 1-3 is described in Section~\ref{sec:sys}. The radio resource allocation problem formulation and the stability analysis are presented in Section~\ref{sec:MDP}. The advanced DRL algorithms for solving the formulated problems are proposed in Section~\ref{sec:DRL}. The numerical results are demonstrated and discussed in Section~\ref{sec:simulation}, followed by a conclusion in Section~\ref{sec:conclusion}.

\textbf{Notations:} Matrices and vectors are denoted by capital and lowercase upright bold letters, e.g., $\mathbf{A}$ and $\mathbf{a}$, respectively. $\mathbb{E}\left[\cdot\right]$ is the expectation operator. $\left[\mathbf{A}\right]_{i,j}$ denotes the element at $i$-th row and $j$-th column of a matrix $\mathbf{A}.\ \rho\left(\mathbf{A}\right)$ is the spectral radius of a matrix $\mathbf{A}$, i.e., the largest absolute value of its eigenvalues. $(\cdot)^\mathrm{T}$ is the vector or matrix transpose operator. $(\cdot)^\mathrm{H}$ is the vector or matrix conjugate transpose operator. $(\cdot)^{-1}$ is the matrix inverse operator. $\operatorname{Tr}(\cdot)$ is the matrix trace operator. $\mathbb{R}$ and $\mathbb{N}$ denote the sets of real number and positive integers, respectively. $\odot$ is the element-wise product operator.

\section{System Model of Remote State Estimation} \label{sec:sys}
We consider a remote estimation system with $N$ dynamic plants, each monitored by a sensor, and a remote estimator, as shown in Fig.~\ref{fig:system_model}. The $N$ sensors transmit the measurement data to the remote estimator over $M$ frequency channels (i.e., subcarriers), where $M<N$. All sensors and the remote estimator are equipped with a single antenna. The index set of the sensors is denoted as $\mathcal{N} \triangleq\{1,2, \cdots, N\}$. The remote estimator operates a dynamic radio resource allocation policy for sensor transmit power control and channel allocation.
\begin{figure}[t]
	\centering\includegraphics[width=3.5in]{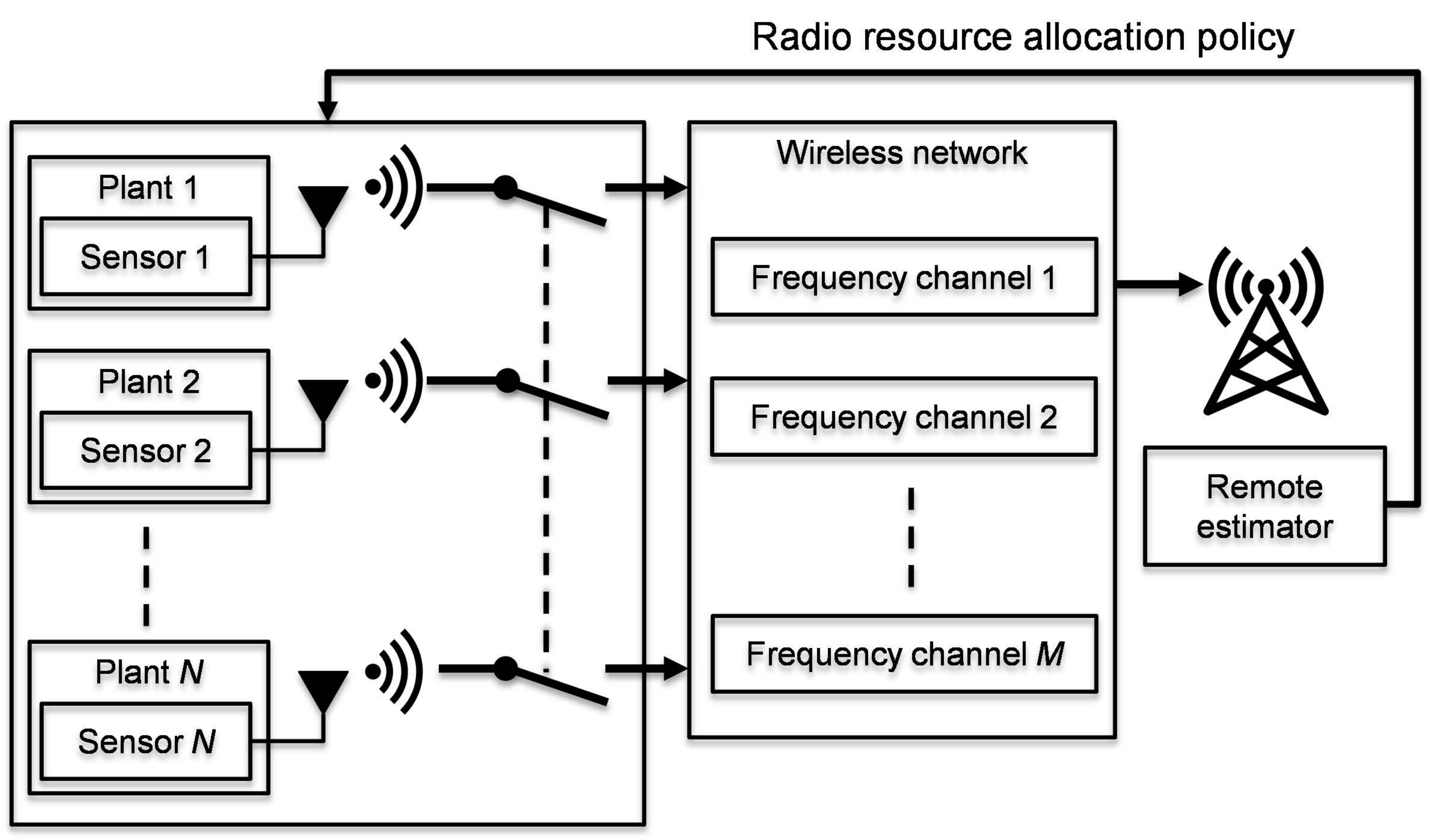}
	\vspace{-0.5cm}
	\caption{The $N$-sensor-$M$-channel remote estimation system.}
	\label{fig:system_model}
	\vspace{-0.8cm}
\end{figure}

\subsection{Local State Estimation}
Plant $n$ is modeled as the discrete-time linear time-invariant (LTI) system as \cite{Huang2020MDP} 
\begin{equation}\label{LTI}
    \begin{aligned}
    \mathbf{x}_{n}(t+1) &=\mathbf{A}_{n} \mathbf{x}_{n}(t)+\mathbf{w}_{n}(t)\\
    \mathbf{y}_{n}(t) &=\mathbf{C}_{n} \mathbf{x}_{n}(t)+\mathbf{v}_{n}(t)
    \end{aligned}
\end{equation}
where $\mathbf{x}_n(t)\in\mathbb{R}^{l_n}$ is the plant state vector; $\mathbf{A}_n\in\mathbb{R}^{l_n\times l_n}$ is the state transition matrix; $\mathbf{y}_n(t)\in\mathbb{R}^{r_n}$ is the sensor measurement vector; $\mathbf{C}_n\in\mathbb{R}^{r_n\times l_n}$ is the measurement matrix; $\mathbf{w}_n(t)\in\mathbb{R}^{l_n}$ and $\mathbf{v}_n(t)\in\mathbb{R}^{r_n}$ are the plant disturbance and sensing measurement noise vectors, respectively. These noise vectors are independent and identically distributed (i.i.d.) zero-mean Gaussian processes with covariance matrices $\mathbf{W}_n$ and $\mathbf{V}_n$, respectively.

Due to the measurement distortion and noise in \eqref{LTI}, each sensor runs a local estimator to pre-estimate the plant state $\mathbf{x}_n(t)$ based on the raw measurement $\mathbf{y}_n(t)$ before sending it to the remote estimator. We note that the KF is the optimal estimator of LTI systems in terms of the average  estimation MSE~\cite{Leong2020OMA}. The KF at sensor $n$ is given as \cite{Huang2020MDP}
\begin{subequations}
\begin{align}
\mathbf{x}_{n}^{s}\left(t\middle| t-1\right) &=\mathbf{A}_{n} \mathbf{x}_{n}^{s}(t-1)\label{KF,a}\\
\mathbf{P}_{n}^{s}\left(t\middle| t-1\right) &=\mathbf{A}_{n} \mathbf{P}_{n}^{s}(t-1) \mathbf{A}_{n}^{\mathrm{T}}+\mathbf{W}_{n}\label{KF,b}\\
\mathbf{K}_{n}(t) &=\mathbf{P}_{n}^{s}\left(t\middle| t-1\right) \mathbf{C}_{n}^{\mathrm{T}}\left(\mathbf{C}_{n} \mathbf{P}_{n}^{s}\left(t\middle| t-1\right) \mathbf{C}_{n}^{T}+\mathbf{V}_{n}\right)^{-\mathbf{1}}\label{KF,c}\\
\mathbf{x}_{n}^{s}(t) &=\mathbf{x}_{n}^{s}\left(t\middle| t-1\right)+\mathbf{K}_{n}(t)\left(\mathbf{y}_{n}(t)-\mathbf{C}_{n} \mathbf{x}_{n}^{s}\left(t\middle| t-1\right)\right)\label{KF,d}\\
\mathbf{P}_{n}^{s}(t) &=\left(\mathbf{I}_{n}-\mathbf{K}_{n}(t) \mathbf{C}_{n}\right) \mathbf{P}_{n}^{s}\left(t\middle| t-1\right)\label{KF,e}
\end{align}
\end{subequations}
where $\mathbf{I}_n$ is the $l_n\times l_n$ identity matrix; $\mathbf{x}_n^s\left(t\middle| t-1\right)$ is the prior state estimate; $\mathbf{x}_n^s(t)$ is the posterior state estimate of $\mathbf{x}_n(t)$ at time $t$, i.e., the pre-filtered measurement of $\mathbf{y}_n(t); \mathbf{K}_n(t)$ is the Kalman gain; The matrices $\mathbf{P}_n^s\left(t\middle| t-1\right)$ and $\mathbf{P}_n^s(t)$ represent the prior and posterior error covariance at the sensor at time $t$, respectively. \eqref{KF,a} and \eqref{KF,b} present the prediction steps while \eqref{KF,c}, \eqref{KF,d}, and \eqref{KF,e} correspond to the updating steps. The estimation error covariance $\mathbf{P}_n^s(t)$ is defined as
\begin{equation}\label{EstimationCovariance}
    \mathbf{P}_{n}^{s}(t) \triangleq \mathbb{E}\left[\left(\mathbf{x}_{n}^{s}(t)-\mathbf{x}_{n}(t)\right)\left(\mathbf{x}_{n}^{s}(t)-\mathbf{x}_{n}(t)\right)^{\mathrm{T}}\right].
\end{equation}

Since our work focuses on the remote state estimation, we assume that each local estimator is stable and operates in a steady state, i.e., the estimation error covariance of the local KF is a constant $\mathbf{P}_n^s(t)\triangleq{\bar{\mathbf{P}}}_n,\forall t\in\mathbb{N},n\in\mathcal{N}$ \cite{Leong2020OMA,Liu2021FSMC,Liu2022Stability1}.

\subsection{Wireless Channel Model}\label{sec:channel}
We consider finite-state time-homogeneous Markov block-fading channels \cite{Sadeghi2008FSMC}. The remote estimator has the knowledge of channel state information achieved by standard channel estimation and feedback techniques. 

Let $\mathbf{g}_{n}(t) \triangleq(g_{n, 1}(t), \ldots, g_{n, M}(t))^{\mathrm{T}}$ represent the channel power gain between sensor $n$ and the remote estimator over $M$ channels. Thus, the overall channel state information is $\mathbf{G}(t)\triangleq\left[\mathbf{g}_{1}(t),\mathbf{g}_{2}(t),\cdots,\mathbf{g}_{N}(t)\right]$. Each channel power gain has $H$ states, i.e., $\mathbf{g}_{n,m}(t) \in \mathcal{G} \triangleq\left\{h_{1}, h_{2}, \ldots, h_{H}\right\}$. The channel state vector $\mathbf{g}_{n}(t) \in \mathcal{G}^{M} \triangleq\left\{\tilde{\mathbf{g}}_{1}, \tilde{\mathbf{g}}_{2}, \ldots, \tilde{\mathbf{g}}_{H^{M}}\right\}$ is modeled as a multi-state Markov chain. Since the sensors are dislocated and have different radio propagation environment, we assume that the Markov channel states of different sensors are independent. Let $\mathbf{M}_{n} \in \mathbb{R}^{H^M \times H^M}$ denote the channel state transition matrix of sensor $n$
\begin{equation}\label{ChannelTransition}
\left[\mathbf{M}_{n}\right]_{i, j} \triangleq p_{i, j}^{h}=\operatorname{Pr}\left[\mathbf{g}_{n}(t+1)=\tilde{\mathbf{g}}_{j} \mid \mathbf{g}_{n}(t)=\tilde{\mathbf{g}}_{i}\right].
\end{equation}
We note that the channel state transition matrices of the $N$ sensors are unknown to the remote estimator. This is because the estimation of a multi-dimensional Markov chain model is computationally intensive \cite{He2017FSMC}.

We adopt the short-packet communications for real-time remote estimation \cite{Polyanskiy2010BLER}. Given the packet length $l$ (i.e., the number of symbols per packet), the number of data bits $b$, the signal-to-noise ratio (SNR) $\gamma_n$ of sensor $n$, we have the Shannon capacity $\mathcal{C}(\gamma_n)=\log_2{(1+\gamma_n)}$ and the channel dispersion $\mathcal{V}(\gamma_n)=(1-(1+\gamma_n)^{-2})(\log_2{e})^2$. Then, the decoding failure probability of sensor $n$’s packet can be approximated as \cite{Liu2021Polyanskiy}
\begin{equation}\label{BLER}
\varepsilon\left(\gamma_{n}\right) \approx \mathcal{Q}\left(\frac{\mathcal{C}\left(\gamma_{n}\right)-\frac{b}{l}}{\sqrt{\frac{\mathcal{V}\left(\gamma_{n}\right)}{l}}}\right),
\end{equation}
where $\mathcal{Q}(x)=(\frac{1}{\sqrt{2\pi}})\int_{x}^{\infty}{e^{-\frac{t^2}{2}}\text{d}t}$ is the Gaussian Q-function.

\subsection{Multiple-Access Schemes}\label{sec:MAC}
We consider three multiple-access schemes as illustrated in Fig.~\ref{fig:MAS}.
\begin{figure}[t]
	\centering
	\subfigure[]{
		\begin{minipage}[b]{0.307\linewidth}
			\includegraphics[width=1\linewidth]{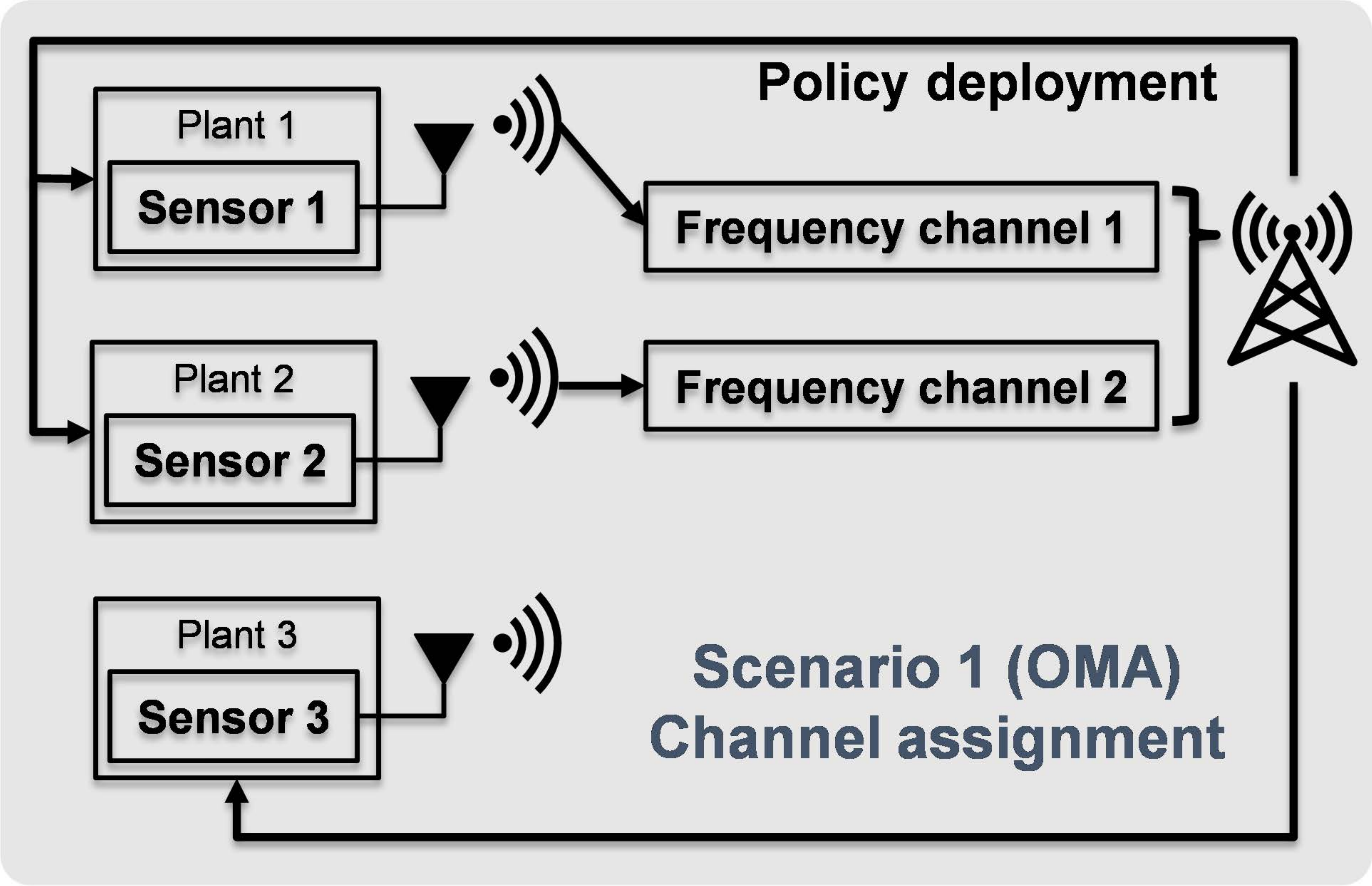}
		\end{minipage}
	}
    	\subfigure[]{
    		\begin{minipage}[b]{0.307\linewidth}
   		 	\includegraphics[width=1\linewidth]{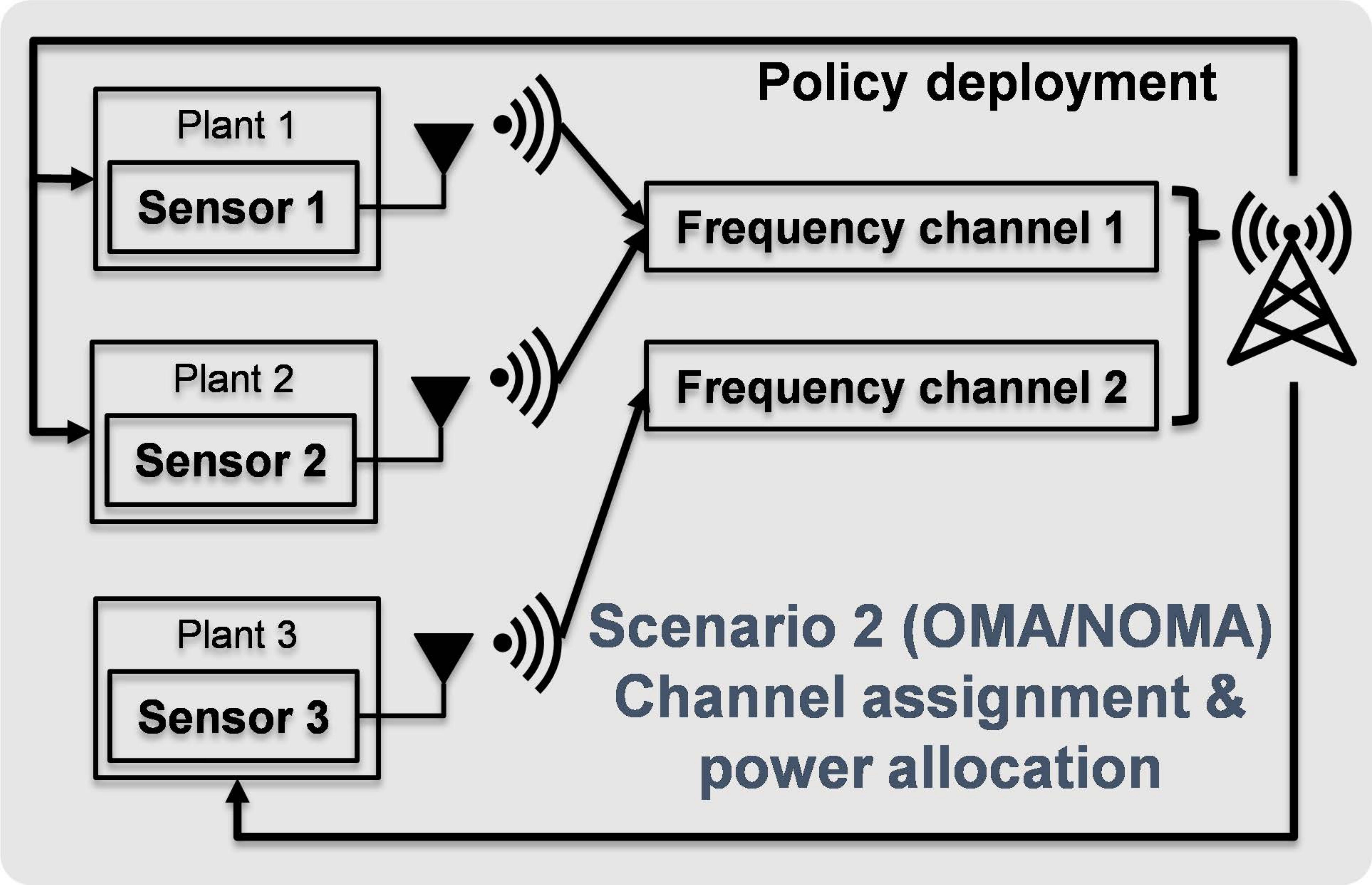}
    		\end{minipage}
    	}
    	    \subfigure[]{
        		\begin{minipage}[b]{0.307\linewidth}
       		 	\includegraphics[width=1\linewidth]{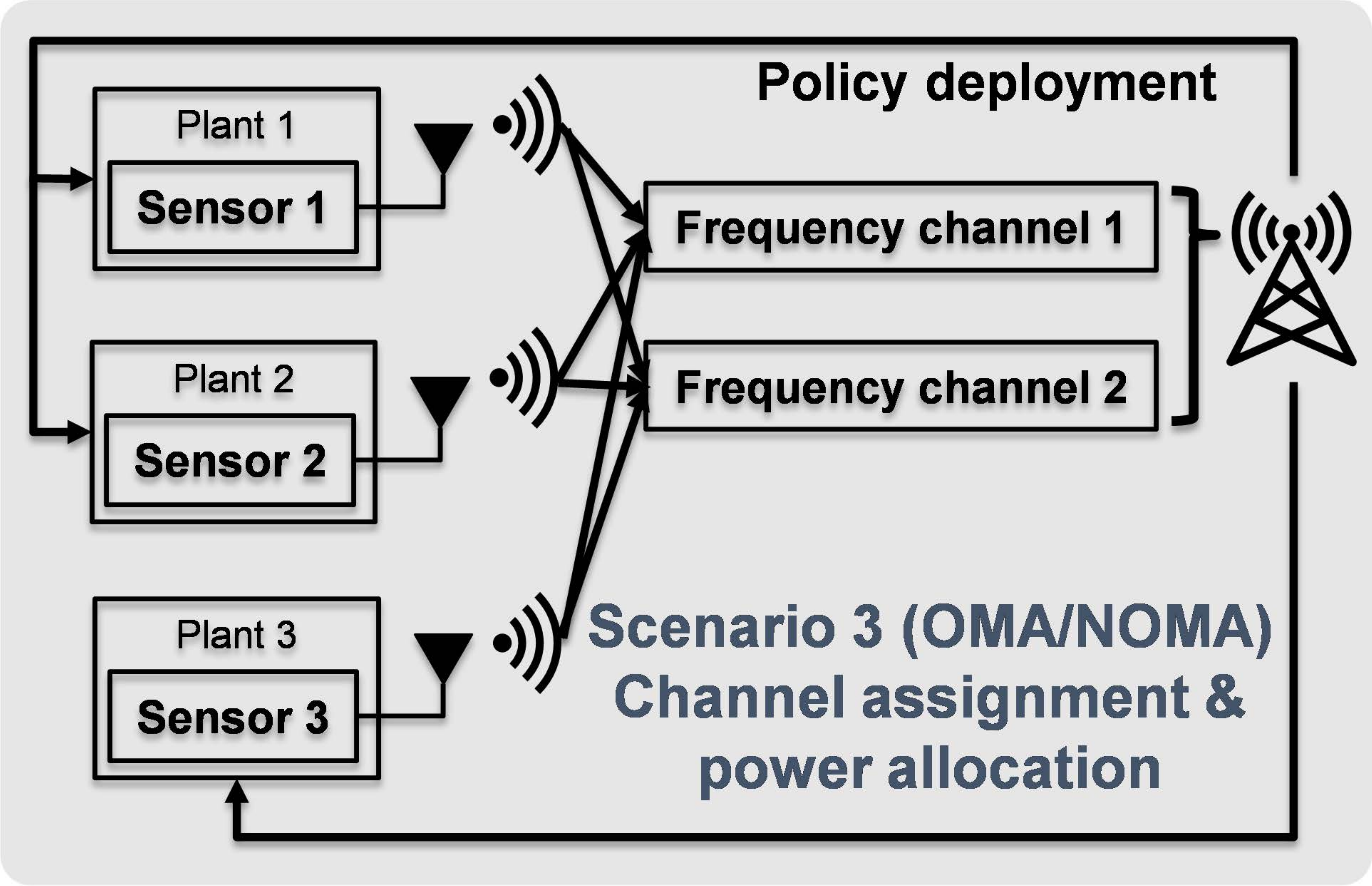}
        		\end{minipage}
        	}
    \vspace{-0.5cm}
	\caption{Three multiple-access schemes, illustrated with a $3$-sensor-$2$-channel remote estimation system. (a) Scenario 1 without IRC nor SIC. (b) Scenario 2 with SIC only. (c) Scenario 3 with IRC and SIC.}
	\vspace{-0.8cm}
	\label{fig:MAS}
\end{figure}

\subsubsection{\textbf{Scenario 1 (OMA).} Single-Sensor Single-Channel Association}
In the OMA scheme, each sensor takes at most one channel for transmission, and each channel is allocated to a single sensor. 
We denote the scheduling action of sensor $n$ at the $M$ channels as a binary vector $\mathbf{d}_n(t)\triangleq(d_{n,1}(t),\ldots,d_{n,M}(t))^{\mathrm{T}}\in\left\{0,1\right\}^M$, where $d_{n,m}(t)=1$ denotes that channel $m$ is allocated to sensor $n$ with the constraint that
\begin{equation}\label{Scen1,Constraint}
\sum_{m=1}^{M} d_{n, m}(t) \leq 1, \sum_{n=1}^{N} d_{n, m}(t)=1.
\end{equation}
Thus, the overall scheduling action of all sensors is
$\mathbf{D}(t)\triangleq\left[\mathbf{d}_{1}(t),\mathbf{d}_{2}(t),\cdots,\mathbf{d}_{N}(t)\right]$. Each scheduled sensor uses full power for transmission.

Given the transmission power budget $P_{\max}$ and the receiving noise power $\sigma^2$, the receiving SNR of sensor $n$'s signal is
\begin{equation}\label{Scen1,SINR}
\gamma_{n}(t)=\frac{P_{n}^{\mathrm{rx}}(t)}{\sigma^{2}}=\frac{\left(\mathbf{d}_{n}(t)\right)^{\mathrm{T}}\mathbf{g}_{n}(t)P_{\max }}{\sigma^{2}}.
\end{equation}
Taking \eqref{Scen1,SINR} into \eqref{BLER}, the decoding failure probability of sensor $n$’s packet can be expressed as ${\hat{\varepsilon}}_n(t)\triangleq\varepsilon(\gamma_n(t))$.

\subsubsection{\textbf{Scenario 2 (NOMA).} Multi-Sensor Single-Channel Association}
In this NOMA scheme, each sensor takes at most one frequency channel for transmission while each channel can be allocated to multiple sensors at the same time. In the scheme, at each time slot, the remote estimator needs to determine both the sensor to channel assignment and the sensor transmission power to manage the interference at the same channel.

Let $\mathbf{d}_n(t)\triangleq(d_{n,1}(t),d_{n,2}(t),\ldots,d_{n,M}(t))^{\mathrm{T}}\in\left\{0,1\right\}^M$ denote the binary channel selection action of sensor $n$ with the constraint
\begin{equation}\label{Scen2,Constraint}
\sum_{m=1}^{M} d_{n, m}(t) \leq 1.
\end{equation}
The channel allocation constraint \eqref{Scen2,Constraint} in Scenario 2 is less restrictive than \eqref{Scen1,Constraint} in Scenario 1. Let $\mathbb{P}\triangleq[0,P_{\max}]$ and $\mathbf{p}^{\mathrm{tx}}(t)\triangleq(P_1^{\mathrm{tx}}(t),P_2^{\mathrm{tx}}(t),\ldots,P_N^{\mathrm{tx}}(t))^{\mathrm{T}}\in\mathbb{P}^N$ denote the transmission power of $N$ sensors at time $t$. Then, the received signal power of sensor $n$ is
\begin{equation}
P_{n}^{\mathrm{rx}}(t)=\left(\mathbf{d}_{n}(t)\right)^{\mathrm{T}}\mathbf{g}_{n}(t)P_{n}^{\mathrm{tx}}(t).
\end{equation}

To decode the sensor packets at the same channel, the remote estimator performs SIC with the decreasing order of the received signal power of the  \cite{Ding2020SIC,Schiessl2020NOMA,Vaezi2019NOMA}, i.e., the strongest/weakest sensor signal is decoded first/last. The first sensor packet is decoded by treating all other sensor signals as interference. Once it is decoded successfully, the sensor signal can be reconstructed perfectly and thus removed from the received signal. Then, the second sensor packet will be decoded without the interference of the first one. The decoder stops once a decoding failure occurs or the last sensor packet has been successfully decoded. 

Assuming that $P_1^{\mathrm{rx}}(t)\geq P_2^{\mathrm{rx}}(t)\geq\ldots\geq P_N^{\mathrm{rx}}(t)$, the signal-to-interference-plus-noise ratio (SINR) for decoding sensor $n$’s packet is
\begin{equation}
\gamma_{n}(t)=\frac{P_{n}^{\mathrm{rx}}(t)}{\sum_{i=n+1}^{N} \left(\mathbf{d}_{n}(t)\right)^{\mathrm{T}}\mathbf{d}_{i}(t) P_{i}^{\mathrm{rx}}(t)+\sigma^{2}},
\end{equation}
where $\sigma^2$ is average power of independent additive white Gaussian noise at the remote estimator.
Thus, the decoding failure probability of sensor $n$’s packet is
\begin{equation}
\hat{\varepsilon}_{n}(t)=\!\begin{cases}
\!\varepsilon\left(\gamma_{1}(t)\right),&n=1\\
U_{n,1}+\sum_{k=2}^{n}\left(U_{n,k}\prod_{i=1}^{k-1}\left(1\!-\!U_{k,i}\right)\right),&n>1.
\end{cases}
\end{equation}
where $U_{n,n} = \varepsilon\left(\gamma_{n}(t)\right)$ and $U_{n,i} = \left(\mathbf{d}_{n}(t)\right)^{\mathrm{T}} \mathbf{d}_{i}(t) \hat{\varepsilon}_{i}(t),\forall i  \leq  n$.

\subsubsection{\textbf{Scenario 3 (NOMA).} Multi-Sensor Multi-Channel Association}
In this NOMA scheme, each sensor can broadcast its signal to multiple channels, and each channel can carry multiple sensors’ signals at the same time. Thus, Scenario 3 relaxes the constraints of channel assignment, the remote estimator solely needs to determine the transmission power of each sensor at different channels. Let $\mathbf{p}_n^{\mathrm{tx}}(t)\triangleq\left(P_{n,1}^{\mathrm{tx}}(t),P_{n,2}^{\mathrm{tx}}(t),\ldots,P_{n,M}^{\mathrm{tx}}(t)\right)^{\mathrm{T}}\in\mathbb{P}^M$ denote sensor $n$’s transmit power at the $M$ channels with the sum power constraint
$\sum_{m=1}^{M}{P_{n,m}^{\mathrm{tx}}(t)}\le P_{\max}$.
If $\mathbf{p}_n^{\mathrm{tx}}(t)=\mathbf{0}$, sensor $n$ is not scheduled for transmission at $t$. 

In contrast to Scenario 2, the remote estimator needs to jointly detect multiple sensor packets based on the received signals at the $M$ channels in Scenario 3. 
We define the $\ell$th transmitted signal (symbol) of sensor $n$'s packet in time slot $t$ as $u_n^{\mathrm{tx}}(t,\ell)\in\mathbb{C}$, where the symbol index $\ell$ is dropped for ease of notation in what follows. 
With a slight abuse of notation, let $\mathbf{g}_n(t)$ denote the channel coefficient vector of sensor $n$ in Scenario 3.
The received signal at the $M$ channels can be written~as
\begin{equation}\label{Scen3,DetectingSingal}
\mathbf{u}^{\mathrm{rx}}(t)=\sum_{n=1}^{N}\left(\sqrt{\mathbf{p}_{n}^{\mathrm{tx}}(t)} \odot \mathbf{g}_{n}(t)\right) u_{n}^{\mathrm{tx}}(t)+\mathbf{z},
\end{equation}
where $\mathbf{z}$ is the receiver noise vector with $\mathbf{Z}=\mathbb{E}\left[\mathbf{z}\mathbf{z}^H\right]=\sigma^2\mathbf{I}_M$. $\mathbf{I}_M$ denotes a $M\times M$ identity matrix. For detecting sensor $n$’s signal, \eqref{Scen3,DetectingSingal} is rewritten as
\begin{equation}
\mathbf{u}^{\mathrm{rx}}(t)=\left(\sqrt{\mathbf{p}_{n}^{\mathrm{tx}}(t)} \odot \mathbf{g}_{n}(t)\right) u_{n}^{\mathrm{tx}}(t)+\mathbf{q}_{n}(t),
\end{equation}
where $\mathbf{q}_n\left(t\right)=\sum_{i=1,i\neq n}^{N}{\left(\sqrt{\mathbf{p}_i^{\mathrm{tx}}\left(t\right)}\odot\mathbf{g}_i\left(t\right)\right)u_i^{\mathrm{tx}}\left(t\right)}+\mathbf{z}$ can be treated as the colored interference to $u_n^{\mathrm{tx}}(t)$ since the interference at different channels are correlated. The covariance matrix of $\mathbf{q}_n$~is
\begin{equation}
\mathbf{R}_{n}(t) \triangleq \mathbb{E}\left[\mathbf{q}_{n}(t)\left(\mathbf{q}_{n}(t)\right)^{\mathrm{H}}\right]=\mathbf{Z}+\sum_{i=1, i \neq n}^{N}\left(\sqrt{\mathbf{p}_{i}^{\mathrm{tx}}(t)} \odot \mathbf{g}_{i}(t)\right)\left(\sqrt{\mathbf{p}_{i}^{\mathrm{tx}}(t)} \odot \mathbf{g}_{i}(t)\right)^{\mathrm{H}}.
\end{equation}

We adopt the optimal linear combining scheme, i.e., interference rejection combining (IRC), to calculated the minimum SINR for each sensor signal \cite{Barreto2008IRC}. The IRC combiner is
\begin{equation}
\mathbf{c}_n(t)^{\mathrm{T}}\triangleq{\left(\sqrt{\mathbf{p}_n^{\mathrm{tx}}(t)}\odot\mathbf{g}_n(t)\right)}^\mathrm{H}\left(\mathbf{R}_n(t)\right)^{-1}\in\mathbb{R}^M.
\end{equation}
Therefore, the combined signal for detecting $u_n^\mathrm{tx}(t)$ is $\mathbf{c}_n(t)^{\mathrm{T}}\mathbf{u}^{\mathrm{rx}}(t)$, and the SINR is
\begin{equation}
\gamma_{n}(t)=\left(\sqrt{\mathbf{p}_{n}^{\mathrm{tx}}(t)} \odot \mathbf{g}_{n}(t)\right)^\mathrm{H} \left(\mathbf{R}_n(t)\right)^{-1}\left(\sqrt{\mathbf{p}_{n}^{\mathrm{tx}}(t)} \odot \mathbf{g}_{n}(t)\right).
\end{equation}
The remote estimator decodes the sensor packet with the largest SINR and subtracts the successfully detected signal from $\mathbf{u}^\mathrm{rx}(t)$.

Different from Scenario 2, the SINR order of the received signals can change after the SIC. Thus, we propose a \emph{multi-round IRC-SIC scheme}: in each round, the remote estimator calculates the SINR of the remaining sensor signals based on IRC and performs SIC of the signal with the highest SINR. The decoder terminates once a decoding failure occurs or all sensor packets are successfully decoded.
Assuming that sensor $n$’s packet is supposed to be decoded in the $\widetilde{n}$-th round and the remaining sensor signal index set is $\mathcal{I}_{\tilde{n}}(t)$, the SINR for detection is
\begin{equation}\label{eq:scenario3_SINR}
\tilde{\gamma}_{\tilde{n}}(t) =\left(\sqrt{\mathbf{p}_{n}^{\mathrm{tx}}(t)} \odot \mathbf{g}_{n}(t)\right)^{\mathrm{H}} \left(\tilde{\mathbf{R}}_{n}(t)\right)^{-1}\left(\sqrt{\mathbf{p}_{n}^{\mathrm{tx}}(t)} \odot \mathbf{g}_{n}(t)\right),
\end{equation}
where $\tilde{\mathbf{R}}_{n}(t)=\mathbf{Z}+\sum_{i \in \mathcal{I}_{\tilde{n}}(t)}\left(\sqrt{\mathbf{p}_{i}^{\mathrm{tx}}(t)} \odot \mathbf{g}_{i}(t)\right)\left(\sqrt{\mathbf{p}_{i}^{\mathrm{tx}}(t)} \odot \mathbf{g}_{i}(t)\right)^{\mathrm{H}}$.

Thus, the decoding failure probability of sensor $n$ is 
\begin{equation}
\hat{\varepsilon}_{n}(t)=\varepsilon\left(\tilde{\gamma}_{1}(t)\right)
+ (1-\varepsilon\left(\tilde{\gamma}_{1}(t)\right)) \varepsilon\left(\tilde{\gamma}_{2}(t)\right)+\dots +
\prod^{\tilde{n}-1}_{i=1}
(1-\varepsilon\left(\tilde{\gamma}_{i}(t)\right))
\varepsilon\left(\tilde{\gamma}_{\tilde{n}}(t)\right)
\end{equation}

\uline{\textbf{Decoding complexity}}. In Scenario 1, the remote estimator only needs to decode a single packet from each channel; Scenario 2 needs to perform SIC for multiple rounds at each channel, leading to an increased decoding complexity; Scenario 3 requires IRC for each undecoded sensor signal at each SIC round and is of higher complexity than Scenario 2. Thus, the decoding complexity increases from Scenario 1 to 3.

\subsection{Remote State Estimation}
Due to transmission scheduling and packet detection errors, sensor $n$’s packet may not be received by the remote estimator at each time slot.  Let $\zeta_n(t)=1$ denote the successful sensor $n$’s transmission at $t$. To provide real-time state estimation of all plants, the remote estimator employs a minimum mean-square error (MMSE) state estimation for each plant \cite{Leong2020OMA,Liu2021FSMC,Liu2022Stability1}
\begin{equation}\label{x_hat}
\hat{\mathbf{x}}_{n}(t)=\left\{\begin{array}{cc}
\mathbf{x}_{n}^{s}(t), & \text { if } \zeta_{n}(t)=1 \\
\mathbf{A}_{n} \widehat{\mathbf{x}}_{n}(t-1), & \text {otherwise.}
\end{array}\right.
\end{equation}
Thus, the remote estimation error covariance is
\begin{align} \label{Pn_1}
\mathbf{P}_{n}(t) & \triangleq \mathbb{E}\left[\left(\hat{\mathbf{x}}_{n}(t)-\mathbf{x}_{n}(t)\right)\left(\hat{\mathbf{x}}_{n}(t)-\mathbf{x}_{n}(t)\right)^{\mathrm{T}}\right] \\
&=\left\{\begin{array}{cc}
\mathbf{\bar{P}}_{n}, & \text { if } \zeta_{n}(t)=1 \\ \label{Pn_2}
\mathbf{A}_{n} \mathbf{P}_{n}(t-1) \mathbf{A}_{n}^{\mathrm{T}}+\mathbf{W}_{n}, & \text {otherwise.}
\end{array}\right.
\end{align}
where \eqref{Pn_2} is obtained by taking \eqref{x_hat} and \eqref{EstimationCovariance} into \eqref{Pn_1}. Recall that $\mathbf{\bar{P}}_{n}$ is the local estimation error covariance.

Now we define $\tau_n(t)$ as the age of information (AoI) of sensor $n$ at time slot $t$, representing the time interval since the last successful transmission of sensor $n$ to the remote estimator. Therefore, the AoI state of sensor $n$ has the updating rule below
\begin{equation}\label{AoI,Updating}
\tau_{n}(t)=\left\{\begin{array}{cc}
1, & \text { if } \zeta_{n}(t-1)=1 \\
\tau_{n}(t-1)+1, & \text {otherwise. }
\end{array}\right.
\end{equation}
A larger AoI indicates that the remote estimation is less accurate.
Jointly using \eqref{Pn_2} and \eqref{AoI,Updating}, the remote estimation error covariance can be concisely rewritten as a function of AoI as
\begin{equation}
\mathbf{P}_{n}(t)=f_{n}^{\tau_{n}(t)}\left(\mathbf{\bar{P}}_{n}\right),
\end{equation}
where $f_{n}^{1}(\mathbf{X}) =\mathbf{A}_{n} \mathbf{X} \mathbf{A}_{n}^{\mathrm{T}}+\mathbf{W}_{n}$ and $f_{n}^{\tau_{n}}(\mathbf{X})=f_{n}^{1}\left(f_{n}^{\tau_{n}-1}(\mathbf{X})\right)$.

To quantify the remote estimation quality of sensor $n$ at time $t$, we define the \emph{estimation cost function}, i.e., the sum estimation MSE of the plant vector state, as
\begin{equation} \label{eq:ErrorCovariance}
J_{n}(t)
\triangleq 
\mathbb{E}\left[\left(\hat{\mathbf{x}}_{n}(t)-\mathbf{x}_{n}(t)\right)^{\mathrm{T}}\left(\hat{\mathbf{x}}_{n}(t)-\mathbf{x}_{n}(t)\right)\right]
=\operatorname{Tr}\left(\mathbf{P}_{n}(t)\right)=\operatorname{Tr}\left(f_{n}^{\tau_{n}(t)}\left(\mathbf{\bar{P}}_{n}\right)\right).
\end{equation}
Thus, the AoI state and the local estimation error covariance jointly determine the remote estimation quality. A smaller $J_n(t)$ indicates that the remote state estimation is more accurate. 

Note that if $\rho(\mathbf{A}_n)\geq1$, the remote estimation cost grows up unbounded with the continuously increasing AoI \cite{Liu2021FSMC}. We focus on the stochastic stability of the remote estimator, i.e., average mean-square stability.

\begin{definition}[Mean-Square Stability \cite{Liu2021FSMC}]\label{def:stability}
	\normalfont
	The remote estimation system is average mean-square stable, if the long-term average estimation cost function is bounded, i.e.,
\begin{equation}
\lim _{T \rightarrow \infty} \mathbb{E}\left[\frac{1}{T} \sum_{t=0}^{T-1} \sum_{n=1}^{N} J_{n}(t)\right]<\infty.
\end{equation}
\end{definition}

\section{Problem Formulation} \label{sec:MDP}
For each of the three OMA and NOMA scenarios, we aim to design a 
deterministic and stationary resource allocation policy denoted as $\pi(\cdot) \in \Pi$ that generates channel allocation and power control actions at each time slot, for achieving the optimal discounted long-term average estimation MSE \cite{Leong2020OMA}, i.e.,
\begin{equation}
J^{*}=\min _{\pi(\cdot) \in \Pi} \lim _{T \rightarrow \infty} \mathbb{E}\left[\sum_{t=0}^{T-1} \sum_{n=1}^{N} \beta^{t} J_{n}(t)\right],
\end{equation}
where $\beta\in(0,1)$ is the discount factor.

This is a sequential decision-making problem and is commonly formulated into an MDP. In the sequel, we will present the MDP formulation for each scenario and discuss the stability condition under which the MDPs has a solution. Advanced DRL algorithms will be proposed in Section~\ref{sec:DRL} for solving the MDP problems.

\subsection{MDP Formulation} \label{subsec:MDP}
In general, the MDP takes the observable states $\mathbf{s}(t)$ of the remote estimation system, including both the channel state and the estimation quality state (i.e., the AoI state) and generates the resource allocation action $\mathbf{a}(t)$ at each time by following a policy. The policy is a mapping between the state and the action as $\mathbf{a}(t)=\pi(\mathbf{s}(t))$, where $\pi(\cdot)\in\Pi$. The Markovian property holds clearly due to the Markov channel modeling and the AoI state updating rule. 

\textit{\textbf{States:}} Given the AoI state $\boldsymbol{\tau}(t)=(\tau_1(t),\tau_2(t),\ldots,\tau_N(t))\in\mathbb{N}^N$, and the channel state matrix $\mathbf{G}(t)\in\mathcal{G}^{M\times N}$, the state of the MDP is defined as $\mathbf{s}\left(t\right)\triangleq\left\{\mathbf{G}(t),\boldsymbol{\tau}(t)\right\}\in\mathcal{G}^{M\times N}\times\mathbb{N}^N$.

\textit{\textbf{Actions:}} In Scenario 1, the channel allocation action is $\mathbf{a}(t)\triangleq\mathbf{D}(t)\in\left\{0,1\right\}^{N\times M}$, under constraint \eqref{Scen1,Constraint}. The discrete action space is denoted as $\mathcal{A}^{(1)}$ with the cardinality of $\left|\mathcal{A}^{(1)}\right|=\frac{N!}{(N-M)!}$ In Scenario 2, both the channel allocation $\mathbf{D}(t)$ and the power control $\mathbf{p}^{\mathrm{tx}}(t)\in\mathbb{P}^N$ are required and thus the hybrid action is $\mathbf{a}(t)\triangleq\left\{\mathbf{D}(t),\mathbf{p}^{\mathrm{tx}}(t)\right\}$. The discrete actions space is denoted as $\mathcal{A}^{(2)}$ with the cardinality of $\left|\mathcal{A}^{(2)}\right|=(M+1)^N$. The hybrid action space is denoted as $\mathcal{A}^{(2)}\times\mathbb{P}^N$.\ In Scenario 3, only power allocation is needed, and the action is $\mathbf{a}(t)\triangleq\mathbf{P}^{\mathrm{tx}}(t)$ belong to the continuous action space $\mathbb{P}^{M\times N}$.

\textit{\textbf{Transitions:}} The transition probability of MDP consists of two components, including the channel state transition of all channels and the AoI state transition of all sensors. Since these two components are independent, the state-transition probability from $\mathbf{s}(t)$ to $\mathbf{s}(t+1)$ under a particular action $\mathbf{a}(t)$ is
\begin{equation}
\operatorname{Pr}[\mathbf{s}(t+1) \mid \mathbf{s}(t), \mathbf{a}(t)]=\operatorname{Pr}[\mathbf{G}(t+1) \mid \mathbf{G}(t)] \operatorname{Pr}[\boldsymbol{\tau}(t+1) \mid \mathbf{s}(t), \mathbf{a}(t)],
\end{equation}
where $\operatorname{Pr}[\mathbf{G}(t+1) \mid \mathbf{G}(t)]$ is the transition probability of the channel state and can be directly obtained from the channel state transmission matrices $\mathbf{M}_n$ in \eqref{ChannelTransition} and $\operatorname{Pr}[\boldsymbol{\tau}(t+1) \mid \mathbf{s}(t), \mathbf{a}(t)]$ is related to the AoI state transition probability.
From the AoI state updating rule~\eqref{AoI,Updating}, it directly follows that
\begin{equation}
\begin{aligned}
&\operatorname{Pr}[\boldsymbol{\tau}(t+1) \mid \mathbf{s}(t), \mathbf{a}(t)] =\prod_{n \in \mathcal{K}(t)} \operatorname{Pr}\left[\tau_{n}(t+1) \mid \tau_{n}(t), \mathbf{G}(t), \mathbf{a}(t)\right], \\
&\operatorname{Pr}\left[\tau_{n}(t+1) \mid \tau_{n}(t), \mathbf{G}(t), \mathbf{a}(t),n \in \mathcal{K}(t)\right] = \begin{cases}\hat{\varepsilon}_{n}(t), & \text { if } \tau_{n}(t+1)=\tau_{n}(t)+1 \\
1-\hat{\varepsilon}_{n}(t), & \text {otherwise.}\end{cases}
\end{aligned}
\end{equation}

\textit{\textbf{Rewards:}} We define the reward as the negative cost function as
$
r(t)=-J(t)=-\sum_{n=1}^{N} J_{n}(t).
$
Thus, the MDP solution aims to find a policy for maximizing the discounted long-term average reward $\lim _{T \rightarrow \infty} \mathbb{E}\left[\sum_{t=0}^{T-1} \beta^{t} r(t)\right]$.

\subsection{Stability Conditions}\label{sec:stability}
Before solving the MDP problems, it is critical to elucidate conditions that the remote estimation system needs to satisfy,  which ensure that there exists at least one stationary and deterministic scheduling policy that can stabilize the remote estimation of all plants.
We derive the necessary and sufficient stability conditions as below.

\begin{proposition}\label{prop:stability}
	\normalfont
(a) Sufficient stability condition. 
For any of the three multiple-access schemes in Section~\ref{sec:MAC}, there always exists a stationary and deterministic resource allocation policy to stabilize the $N$-sensor-$M$-channel remote estimation system in sense of Definition~\ref{def:stability}, if the following holds:
\begin{equation}\label{eq:suf}
\max_{n\in\mathcal{N}} \rho^2_{n} \max_{n\in\mathcal{N}} \lambda_n<1,
\end{equation}
where $\rho_{n} \triangleq \rho(\mathbf{A}_n)$ is the spectral radius of $\mathbf{A}_n$,  and
$
\lambda_n \triangleq \rho\left(\mathbf{\Psi} \mathbf{M}_n\right).
$
$\mathbf{\Psi} \triangleq \operatorname{diag}{\{\psi_1,\psi_2,\dots,\psi_{H^M}\}}$ is a diagonal matrix with $\psi_i \triangleq \varepsilon\left(\max \{\tilde{g}_{i,1},\dots,\tilde{g}_{i,M} \}\frac{P_{\max}}{\sigma^2}\right)$. 
Recall that $\tilde{\mathbf{g}}_i\triangleq (\tilde{g}_{i,1},\dots,\tilde{g}_{i,M})$ denotes the $i$th channel state in the channel state space $\mathcal{G}^M$ defined in Section~\ref{sec:channel}, and $\varepsilon(\cdot)$ is the packet decoding failure probability function.

(b) Necessary stability condition. 
For any of the three multiple-access schemes, there is no stationary and deterministic scheduling policy that can stabilize the remote estimation system, if the following does not hold:
\begin{equation}\label{eq:nec}
\max_{n\in\mathcal{N}} \rho^2_{n}\lambda_n<1,
\end{equation}
\end{proposition}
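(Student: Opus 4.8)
The plan is to reduce both claims to controlling the single scalar $\mathbb{E}[\rho_n^{2\tau_n(t)}]$ for each sensor. First I would observe from \eqref{eq:ErrorCovariance} that $J_n(t)=\operatorname{Tr}(f_n^{\tau_n(t)}(\bar{\mathbf P}_n))$, and since $f_n^k(\mathbf X)=\mathbf A_n^k\mathbf X(\mathbf A_n^k)^{\mathrm T}+\sum_{j=0}^{k-1}\mathbf A_n^j\mathbf W_n(\mathbf A_n^j)^{\mathrm T}$, this trace is sandwiched between constant multiples of $\rho_n^{2\tau_n(t)}$ up to a polynomial prefactor arising from possible Jordan blocks of $\mathbf A_n$; hence for every $\epsilon>0$ there is $c(\epsilon)$ with $J_n(t)\le c(\epsilon)(\rho_n^2+\epsilon)^{\tau_n(t)}$, and stability in the sense of Definition~\ref{def:stability} is equivalent to the finiteness of $\lim_{T\to\infty}\frac1T\sum_{t=0}^{T-1}\sum_n\mathbb E[\rho_n^{2\tau_n(t)}]$. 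The polynomial prefactor is absorbed by inflating $\rho_n^2$ to $\rho_n^2+\epsilon$, which is harmless because the target conditions are strict inequalities.

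Next I would set up the AoI--channel coupled chain as the common engine. Tracking $w_i^\tau(t)=\Pr[\tau_n(t)=\tau,\mathbf g_n(t)=\tilde{\mathbf g}_i]$ and using \eqref{AoI,Updating}, the ``AoI-increase'' branch (a decoding failure followed by a channel transition) propagates the sub-probability vector $\mathbf w^\tau(t)$ by the nonnegative matrix $\mathbf M_n^{\mathrm T}\mathbf\Psi$ whenever sensor $n$ transmits at full power on its strongest channel, so that $\mathbf w^{\tau+1}(t+1)=\mathbf M_n^{\mathrm T}\mathbf\Psi\,\mathbf w^\tau(t)$. Because $\rho(\mathbf M_n^{\mathrm T}\mathbf\Psi)=\rho(\mathbf\Psi\mathbf M_n)=\lambda_n$, the weighted sum $\sum_\tau\rho_n^{2\tau}\mathbf 1^{\mathrm T}\mathbf w^\tau$ is a geometric-type series with ratio $\rho_n^2\lambda_n$, converging precisely when $\rho_n^2\lambda_n<1$.

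For the necessary condition \eqref{eq:nec} I would run a genie argument. In any of the three schemes the decoding-failure probability of sensor $n$ in channel state $\tilde{\mathbf g}_i$ is at least $\psi_i$: if $n$ is unscheduled the failure probability is $1$, and if it is scheduled its SINR cannot exceed $\max_m\tilde g_{i,m}P_{\max}/\sigma^2$, since concentrating the whole budget on the strongest channel with no interference is optimal --- this also covers the multi-channel combining of Scenario~3, because the combined SNR is linear in the per-channel powers --- so $\varepsilon(\cdot)\ge\psi_i$ by monotonicity of $\varepsilon$. Thus the real AoI stochastically dominates that of the genie whose failure kernel is exactly $\mathbf\Psi\mathbf M_n$, and the real cost is lower-bounded by the genie cost, which by the previous paragraph diverges whenever $\rho_n^2\lambda_n\ge1$. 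Hence if \eqref{eq:nec} fails, some sensor renders the cost unbounded under every policy and in every scheme.

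For the sufficient condition \eqref{eq:suf} it suffices to exhibit one stabilizing stationary deterministic policy, and since any schedule feasible under the most restrictive OMA constraint \eqref{Scen1,Constraint} is also implementable with no worse decoding under \eqref{Scen2,Constraint} and in Scenario~3, I would construct the policy for Scenario~1 and transfer it. I would take a state-dependent rule (e.g., serve the $M$ sensors of largest current cost, each at full power on its best channel) and carry out a Lyapunov/drift argument on $L(t)=\sum_n(\rho_n^2)^{\tau_n(t)}$, bounding each per-sensor rate uniformly by $\bar\rho^2=\max_n\rho_n^2$ and $\bar\lambda=\max_n\lambda_n$; this uniform decoupling is exactly what yields the product form $\max_n\rho_n^2\cdot\max_n\lambda_n<1$ rather than the tighter $\max_n\rho_n^2\lambda_n$, and indeed the two conditions coincide in the identical-channel case. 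The hard part will be the scheduling coupling forced by $M<N$: a sensor left unserved in a slot has its failure probability jump to $1$, so the individual AoI chains are neither independent nor served every slot, and the crux is to show that the serve-the-worst rule delivers enough service to every sensor that the worst-case ratio $\bar\rho^2\bar\lambda$ still controls the drift of $L(t)$ and keeps $\mathbb E[L(t)]$ bounded.
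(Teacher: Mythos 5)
Your necessity argument is essentially the paper's own proof. The genie you construct --- sensor $n$ scheduled in every slot, interference-free, at full power on its strongest channel, with all other sensors delivered for free --- is exactly the paper's ``virtual policy,'' and your per-slot lower bound $\hat{\varepsilon}_n(t)\ge\psi_i$ (including the Scenario-3 check that the IRC-combined SNR is linear in the per-channel powers, hence at most $\max_m \tilde g_{i,m}P_{\max}/\sigma^2$, with interference only lowering it) combined with the exogeneity of the channel process gives the stochastic-dominance coupling the paper uses implicitly. Where the paper simply imports the single-sensor divergence criterion from \cite{Liu2021FSMC}, you re-derive it via the weighted sub-probability recursion $\mathbf{w}^{\tau+1}=\mathbf{M}_n^{\mathrm{T}}\mathbf{\Psi}\,\mathbf{w}^{\tau}$ and $\rho(\mathbf{M}_n^{\mathrm{T}}\mathbf{\Psi})=\lambda_n$; that is fine, modulo routine care at the boundary case $\rho_n^2\lambda_n=1$ and a positive-definiteness assumption ($\bar{\mathbf{P}}_n\succ 0$) to lower-bound the trace by $c\,\rho_n^{2\tau}$.

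The genuine gap is in part (a). You choose a state-dependent serve-the-$M$-worst policy and a drift argument on $L(t)=\sum_n(\rho_n^2)^{\tau_n(t)}$, and you explicitly leave the crux unproven --- and that crux is not a routine drift computation. The quantity $\lambda_n=\rho(\mathbf{\Psi}\mathbf{M}_n)$ only controls growth over \emph{consecutive served} slots; in an unserved slot the propagation matrix is $\mathbf{M}_n^{\mathrm{T}}$ alone, with spectral radius one, so each unserved slot contributes an undamped factor $\rho_n^2>1$, and under max-weight the inter-service gaps of a given sensor are endogenous random times correlated with all other AoI and channel states. A per-slot drift bound therefore cannot close by itself; you would need, e.g., exponential tail bounds on inter-service times under serve-the-worst, which is itself the hard part you deferred. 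The paper sidesteps exactly this: since only \emph{existence} of a stabilizing policy is required, it uses a persistent OMA schedule (serve sensor $1$ until success, then sensor $2$, and so on) and the estimation-cycle analysis of \cite{Liu2022Stability1} --- the horizon is cut at successful-transmission events, within a service phase the failure process is governed by $\mathbf{\Psi}\mathbf{M}_n$, each sensor's waiting time is a sum of the other sensors' service phases whose tails are governed by $\max_n\lambda_n$, and the per-cycle sum MSE is bounded precisely when $\max_n\rho_n^2\,\max_n\lambda_n<1$; this decoupling of waiting time (worst $\lambda$) from growth rate (worst $\rho^2$) is also why the product form in \eqref{eq:suf}, rather than $\max_n\rho_n^2\lambda_n$ as in \eqref{eq:nec}, appears. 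To repair your proof, replace max-weight by such a persistent (or any fixed-order) schedule so that inter-service times become sums of phases each analyzable by your weighted-chain machinery, then transfer to Scenarios 2 and 3 exactly as you did.
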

\begin{proof}
	See Appendix.
\end{proof}

\begin{remark}
The stability conditions depend on the plant system parameters $\{\mathbf{A}_n\}$, the transmission power budget $P_{\max}$, the channel power gain states $\{\tilde{\mathbf{g}}_1,\dots,\tilde{\mathbf{g}}_{H^M}\}$, and  the Markov channel transition rules $\{\mathbf{M}_n\}$.
These conditions show that if the plant systems are of high dynamics (i.e., with large $\rho(\mathbf{A}_n)$), and the transmission power budget is small, the remote estimation system is difficult to be stabilized. 
\end{remark}

\subsection{The Challenges for Solving the MDPs}\label{subsec:MDPchallenge}
\subsubsection{Unknown Channel Parameters and Large State Space}
If the channel parameters, i.e., the state transition matrices $\{\mathbf{M}_n\}$, are available, the state transition probabilities of the formulated MDP problem are also known. Conventional model-based MDP algorithms, including value and policy iteration algorithms, can solve this type of problem with small state and action spaces. For example, a two-sensor-one-channel scheduling problem was solved by the classical value iteration algorithm in \cite{Wu2018SmallAction}. However, in the absence of state transition probabilities, the model-based algorithms cannot work, and thus data-driven reinforcement learning approaches are preferable.

A conventional data-driven reinforcement learning algorithm, such as Q-learning, requires the construction of a Q-value table, occupying a storage space proportional to the state space. In our formulated problem, the size of state space of the formulated MDP problem is $H^{MN}\times\mathbb{N}^N$. It grows exponentially with the increased numbers of sensor and channel, leading to a huge storage space of the Q-value table. To solve the curse of dimensionality introduced by a large state space, more advanced DRL algorithms (e.g., DQN) adopt deep neural networks (DNNs) for value function approximations to replace Q-value tables.

\subsubsection{Large and Hybrid Action Space}
The formulated MDPs in Scenarios 1-3 have large action spaces, even with relatively small numbers of sensors and channels. For example, when $N=10, M=5$, the first two scenarios have $\frac{N!}{\left(N-M\right)!}=30240$ and $\left(M+1\right)^N=60466176$ discrete actions for channel assignment, respectively. For the continuous power allocation action spaces in Scenarios 2 and 3, a commonly adopted method is action discretization \cite{Cao2019ActionDiscretization}. For example, when $N=10, M=5$ and the quantization level is 2, the action space in scenario 2 has $2^N=1024$ discrete actions for power allocation and total $1024\times60466176\approx62\times{10}^9$ discrete actions for both channel assignment and power allocation. The action space in Scenario 3 has $\left(\frac{M!}{\left(M-0\right)!}+\frac{M!}{\left(M-1\right)!}+\frac{M!}{\left(M-2\right)!}\right)^N\approx14\times{10}^{13}$ discrete actions for power allocation.

However, DQN, the most adopted DRL algorithm for discrete action spaces, cannot handle such large action spaces.\footnote{We note that multi-agent reinforcement learning may be able to cope with dimensional complexity, which is beyond the scope of this work. The remote estimator in this work considers a centralized policy by centralized learning, requiring a single-agent reinforcement learning approach.} It is well-known that DQN has been successfully applied in playing various Atari games which have small action spaces (between 4 and 18) \cite{Mnih2013DQN}. Large action space makes the DQN difficult to train and entails large DNNs, requiring large storage space and high computation complexity. Solving MDPs with large action spaces are challenging. Researchers use ad hoc approaches to tackle specific problems, and there are no general solutions.

\subsubsection{Training Difficulties}
In our MDP problems, the channel states are of high dynamics and the AoI state is also a stochastic function of the channel states and the action. In addition, the absolute value of reward grows exponentially in terms of the AoI state, which means a DRL agent needs to deal with a fairly large range of rewards, inducing a highly fluctuated training process that is difficult to converge. This feature is different from many existing works applying DRL in wireless communications problems \cite{Wang2022LogReward,Liu2018LogReward,Chen2021LogReward}, where the reward/cost functions commonly grow up in linear or log scale with the number of states. Therefore, the highly stochastic states and the large reward range are making a DRL agent difficult to converge to an optimal policy.

In the next section, we will develop advanced DRL algorithms with continuous action space and propose novel continuous to discrete action mapping schemes for effectively solving the radio resource allocation MDPs in Scenarios 1-3.

\section{Deep Reinforcement Learning Algorithm} \label{sec:DRL}
There are many model-free DRL frameworks developed in the past decade \cite{Sewak2019Book,Xiao2021DRLpolicy,Moarales2020Book,Mao2022DRLpolicy,Larsen2021TRPO}. The key features of each framework include whether it is value optimization or policy optimization-based; whether it generates a deterministic or stochastic policy; and whether it adopts an on-policy or off-policy learning method. It is critical to match the right DRL framework to the key features of our MDPs. Then, the selected DRL framework can serve as the foundation for effective problem solving, though none of the existing frameworks can solve our problems directly. In the following, we will present a high-level overview of existing DRL frameworks and justify how we choose the right one for our problems.

\subsection{Overview of DRL Algorithms}
\subsubsection{Value-Based vs. Policy-Based DRL}
A value-based DRL algorithm, such as DQN (and its variations) and deep SARSA, can only handle tasks with discrete and small action spaces. Such a framework has a critic NN, approximating the expected long-term future reward given different actions, to evaluate the actions at an input state. The DRL agent applies an $\epsilon$-greedy policy for action selection at each step based on the critic NN outputs. The critic NN is trained for minimizing the approximation errors. After well-training, one can obtain the optimal policy by greedy action searching based on the critic NN. Therefore, value-based DRL obtains the optimal policy indirectly.

Policy-based DRL is a more effective learning approach, which can optimize policy directly. A policy-based DRL algorithm, such as deep deterministic policy gradient (DDPG), twin-delayed DDPG (TD3), soft actor-critic (SAC), trust region policy optimization (TRPO), and proximal policy optimization (PPO), commonly adopt a dual-NN architecture with actor and critic NNs. In particular, an actor NN is introduced for policy approximation and generating action. The actor NN and the critic NN are optimized simultaneously: the actor NN improves its policy by a policy gradient algorithm for achieving the maximum expected long-term future reward approximated by the critic NN; and the critic NN improves its approximation with the same method as in the value-based DRL algorithms.

Furthermore, the actor NN outputs multi-dimensional continuous action directly and can deal with MDPs of continuous action spaces. We will develop new methods for low-dimensional actor NN output design and new schemes for action mapping from continuous actor output to discrete, hybrid discrete and continuous, and continuous radio resource allocation actions in Scenarios 1-3. These will enable policy-based DRL to solve the MDPs with large action space effectively.

\subsubsection{Stochastic vs. Deterministic Policy in DRL}
Policy-based DRL algorithms such as DDPG and TD3 learn deterministic policies, while SAC, TRPO, and PPO learn stochastic policies. As mentioned in Section~\ref{subsec:MDP}, a deterministic policy maps from state to action deterministically as $a=\pi(s)$. A stochastic policy $\pi(a|s)$ is a multi-dimensional probability density function (PDF) determined by the input state \cite{Sewak2019Book}. The DRL agent samples an action at each time based on the PDF. The PDF is usually assumed to be multi-variant Gaussian and parameterized by the actor NN as $\pi(a|s;\theta)$, where $\theta$ denotes the NN parameters. The actor NN outputs the mean and the standard deviation as $\boldsymbol{\mu}(s;\theta)$ and $\boldsymbol{\sigma}(s;\theta)$, respectively.

Existing research shows that stochastic policy based DRL algorithms provide higher speed for training convergence and better training stability than deterministic ones when the training environment is of high dynamics \cite{Xiao2021DRLpolicy}. This is because the former leads to smoother function approximations, which are easier to train \cite{Moarales2020Book}. Thus, it is preferable to consider stochastic policy based DRL for our MDP problems with highly stochastic states. For online deployment, one can directly apply a maximum likelihood scheme for converting the stochastic policy to a deterministic one.

\subsubsection{On-Policy vs. Off-Policy in DRL}
Policy-based DRL algorithms with stochastic policy can be further categorized into on-policy and off-policy learning methods \cite{Mao2022DRLpolicy}. SAC adopts an off-policy learning method, while TRPO and PPO are on-policy ones \cite{Larsen2021TRPO}. An off-policy algorithm learns the optimal policy (approximated by a target NN) that is different from the behavior policy (approximated by a behavior NN) for generating new experiences during training. The past experiences can be stored in a replay buffer to be reused in updating the target network. In contrast, an on-policy learning algorithm updates the policy that is used to generate experiences during training. The generated experiences by the previous policy are discarded after updating.

The off-policy approach has a higher sampling efficiency as it can reuse the past experiences. However, the old experience can be non-beneficial or even harmful to  train the target network, especially in highly stochastic training environments, as the old behavior policy can be entirely different from the optimal policy. Compared with off-policy methods, on-policy ones are more stable in stochastic environments due to the guaranteed monotonic performance improvement based on their policy updating mechanism \cite{Yang2020DeterMDP}. For the existing on-policy DRL algorithms, PPO shows a more simplified computation architecture than TRPO, which is a state-of-the-art on-policy method \cite{Schulman2017PPO}.

Based on the analysis above, we will adopt PPO, a policy-based DRL method that learns a stochastic policy via on-policy learning, for solving the MDPs. In the following, we will present the general framework of PPO first.
Then, we will develop new methods for low-dimensional actor NN output design and new schemes for action mapping from continuous actor output to discrete, hybrid discrete and continuous, and continuous radio resource allocation actions in Scenarios 1-3. These will enable policy-based DRL to solve the MDPs with large action space effectively.

\subsection{PPO-Based DRL Algorithm for Radio Resource Allocation}\label{subsec:PPO}
As shown in Fig.~\ref{fig:DNN}, the PPO agent has a pair of actor and critic neural networks (NNs), where the NN parameters, including weights and biases, are denoted by $\theta$ and $\varphi$, respectively. Each of the NNs has an input layer, multiple hidden layers, an output layer, and adopts a fully connected and feedforward structure. The input layers of the actor and critic NNs receive the MDP state $\mathbf{s}(t)$. All hidden layers adopt the activation function of rectified linear unit (ReLU) for nonlinear function approximation.
\begin{figure}[t]
	\centering\includegraphics[width=6in]{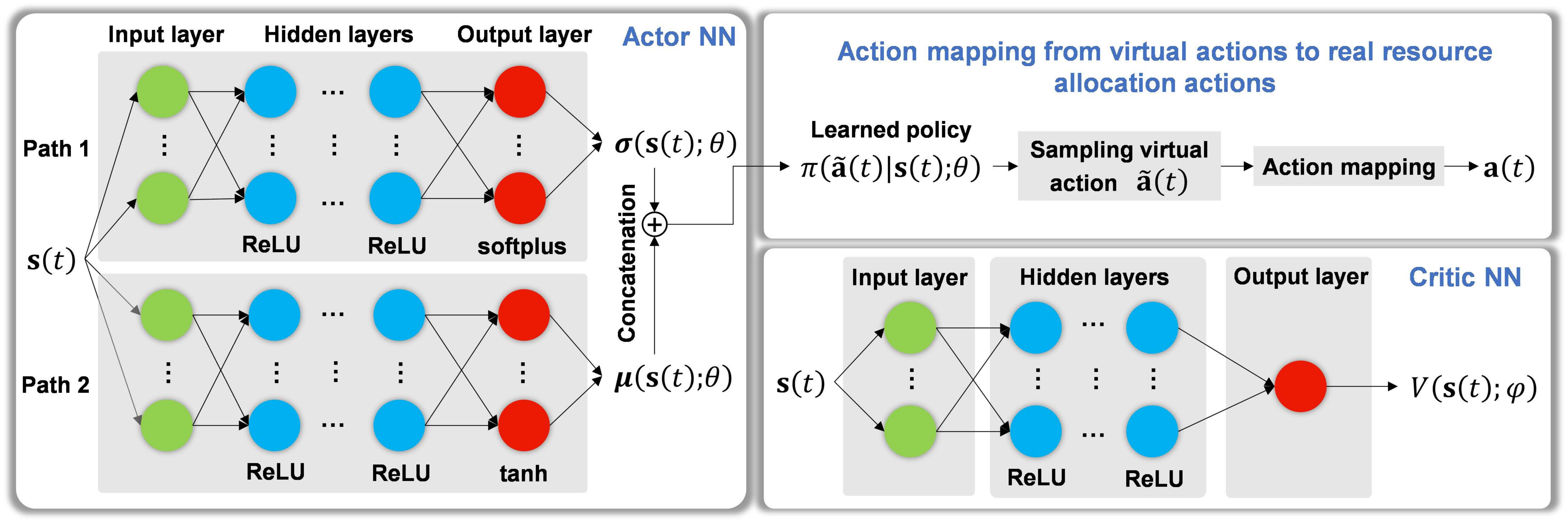}
	\vspace{-0.5cm}
	\caption{An illustration of the PPO's DNN structure with actor and critic NNs, and a action mapping scheme.}
	\label{fig:DNN}
	\vspace{-0.8cm}
\end{figure}

We design the actor NN for generating a multi-dimensional \textit{continuous virtual action} $\tilde{\mathbf{a}}(t)$ with stochastic policy $\pi(\tilde{\mathbf{a}}(t)|\mathbf{s}(t);\theta)$, which is a probability density function of $\tilde{\mathbf{a}}(t)$ given the current state $\mathbf{s}(t)$. The virtual action is then mapped to the real action $\mathbf{a}(t)$ for radio resource allocation as discussed in Section~\ref{sec:ActionMapping} to obtain the next state $\mathbf{s}(t+1)$ and reward $r(t+1)$. The actor output layer generates mean $\boldsymbol{\mu}(\mathbf{s}(t);\theta)$ and standard deviation $\boldsymbol{\sigma}(\mathbf{s}(t);\theta)$ of $\tilde{\mathbf{a}}(t)$. We adopt tanh and softplus as the activation functions for the mean and the standard deviation outputs, respectively. The former is to bound the mean within $[-1,1]$, while the latter is to guarantee a positive standard deviation. The critic NN estimates the state-value function $V(\mathbf{s}(t);\varphi)$ given the actor’s policy $\pi(\tilde{\mathbf{a}}(t)|\mathbf{s}(t);\theta)$, i.e.,
\begin{equation}
V(\mathbf{s}(t) ; \varphi)\approx\mathbb{E}\left[\sum_{k=0}^{\infty} \beta^{k} r(t+k) \bigg| \pi(\tilde{\mathbf{a}}(t+k) | \mathbf{s}(t+k) ; \theta), \forall k \geq 0\right].
\end{equation}
Thus, the critic NN has a single output.

The training of the PPO agent alternates between experience generation and policy update. In the following, we only present the key steps and the high-level concepts of PPO. The detailed explanation can be found in~\cite{Schulman2017PPO}.
The training process is presented in Algorithm~\ref{alg:training}.
\begin{algorithm}[t]
\small
\caption{\small{Training Algorithm of the DRL-based Radio Resource Allocation for Remote State Estimation}}\label{alg:training}
    \begin{algorithmic} [1]
    \Require Episode number $E$, maximum steps per episode $T$, number of epochs for NN updating $K$, discount factor $\gamma$, smoothing factor $\alpha$ of generalized advantage estimator, mini-batch size $B$, clip factor $\omega$, the entropy loss weight factor $w$, the size of the actor output layer $C$.
    \Ensure Well-trained actor network $\pi^*(\cdot|\cdot)$.
    \State Initialize actor network $\pi(\cdot|\cdot)$ and critic network $V(\cdot)$ with random parameter $\theta$ and $\varphi$,respectively; Initialize the AoI of all sensors to zeros; Randomly initialize the channel states of all sensors.
    \For{episode =1,…,$E$}
        \For{t=0,…,$T$}
        \State Generate experiences $<\mathbf{s}(t),\tilde{\mathbf{a}}(t),r(t)>$ by following the current policy $\pi(\cdot|\cdot)$.\label{alg:experiences}
        \EndFor
        \For{t=0,…,$T$}
        \State Compute the advantage function $A(t)$ in \eqref{AdvantageFunc} and the reward-to-go $R(t)$ in \eqref{reward-to-go}.
        \EndFor
        \For{epoch =1,…,$K$}
        \State Sample a random mini-batch data set of size $B$ from the experiences.
        \State Update the critic parameters $\varphi$ by minimizing the loss function $L_{\text{C}}(\varphi)$ in \eqref{eq:LossForCritic}.
        \State Update the actor parameters $\theta$ by minimizing the actor loss function $L_{\text{A}}(\theta)$ in \eqref{eq:LossForActor}.
        \EndFor
    \EndFor
    \end{algorithmic}
\end{algorithm}

1) Experience generation. 
By using current policy $\pi(\cdot|\cdot;\theta_{\text{old}})$, the PPO agent samples data $(\mathbf{s}(t),\tilde{\mathbf{a}}(t),r(t))$ with length of $T$ through interacting with the environment [see Algorithm~\ref{alg:training} line~\ref{alg:experiences}].
By leveraging the generated experience, the advantage function $A(t)$ and the reward-to-go function $R(t)$ for each $t=0,\dots,T-1$ can be calculated as
\begin{align}\label{AdvantageFunc}
&A(t)=\sum_{k=t}^{T-1}(\beta \alpha)^{k-t} (r(k)+\beta V(\mathbf{s}(k+1) ; \varphi)-V(\mathbf{s}(k) ; \varphi)),\\  \label{reward-to-go}  
&R(t)= r(t) + \beta V(\mathbf{s}(t+1) ; \varphi),
\end{align}
where $\alpha$ is a hyper-parameter named as the smoothing factor.

2) Policy update.
The PPO agent creates a mini-batch data set by randomly sampling $B$ data from the generated $T$-length experience earlier, i.e., $\{(\mathbf{s}(t_i),\tilde{\mathbf{a}}(t_i), A(t_i), R(t_i))\}$ where $t_i \in \{t_1,\dots,t_B\} \subset \{0,\dots,T-1\}$. The loss function for updating the critic NN is defined as
\begin{equation}\label{eq:LossForCritic}
L_{\text {C}}(\varphi)=\frac{1}{B} \sum_{i=1}^{B}\left(R(t_i)-V\left(\mathbf{s}(t_i) ; \varphi\right)\right)^{2}.
\end{equation}
This is regarded as the temporal difference error, specifying difference of the state-value estimations based on time steps $t_i$ and $t_i+1$.The loss function for updating the actor NN has been elaborately designed for achieving high training stability and is much more complex:
\begin{equation}\label{eq:LossForActor}
L_{\text {A}}(\theta)=\frac{1}{B} \sum_{i=1}^{B}\left(\min\left\{p(\mathbf{s}(t_i);\theta)A(t_i), c\left(\mathbf{s}(t_i) ; \theta\right)A(t_i)\right\}+w \hbar_{\theta_{\text {old}}}\left(\tilde{\mathbf{a}}(t_i) \right)\right),
\end{equation}
where $
c(\mathbf{s}(t_i);\theta)\!=\!\max\!\left\{\min\! \left\{p(\mathbf{s}(t_i);\theta), 1\!+\!\omega\right\}, 1\!-\!\omega\right\}
$ is a clip function with a hyper-parameter $\omega$, and 
$p(\mathbf{s}(t_i);\theta) = \frac{\pi(\tilde{\mathbf{a}}(t_i)|\mathbf{s}(t_i) ; \theta)}{\pi\left(\tilde{\mathbf{a}}(t_i)| \mathbf{s}(t_i) ; \theta_{\text {old }}\right)}$ is a density ratio. 
$\hbar_{\theta_{\text {old}}}\!\!\left(\tilde{\mathbf{a}}(t_i) \right)$ is the entropy loss function of $\tilde{\mathbf{a}}(t_i)$. Since $\tilde{\mathbf{a}}(t_i)$ follows a Gaussian distribution, its entropy loss can be directly obtained based on its standard deviation $\boldsymbol{\sigma}(\mathbf{s}(t_i);\theta_{\text{old}})$.
$w$ denotes entropy loss weight factor. 
Then, the critic NN and the actor NN parameters $\varphi$ and $\theta$ can be updated by optimizing the loss functions $L_{\text {C}}(\varphi)$ and $L_{\text {A}}(\theta)$, respectively, using the widely adopted Adam optimizer.

After training, the virtual action $\tilde{\mathbf{a}}(t)$ can be generated deterministically based on the maximum likelihood method for online deployment, i.e., $\tilde{\mathbf{a}}(t) = \boldsymbol{\mu}(\mathbf{s}(t);\theta)$.

\subsection{Actor NN Output Design and Action Mapping Schemes} \label{sec:ActionMapping}
We will present the schemes for low dimensional actor NN output design and continuous virtual action to continuous, hybrid, and discrete real action in the sequel in Scenarios 1-3. The schemes are illustrated in Fig.~\ref{fig:ActionMapping}.
\begin{figure}[t]
	\centering\includegraphics[width=\textwidth]{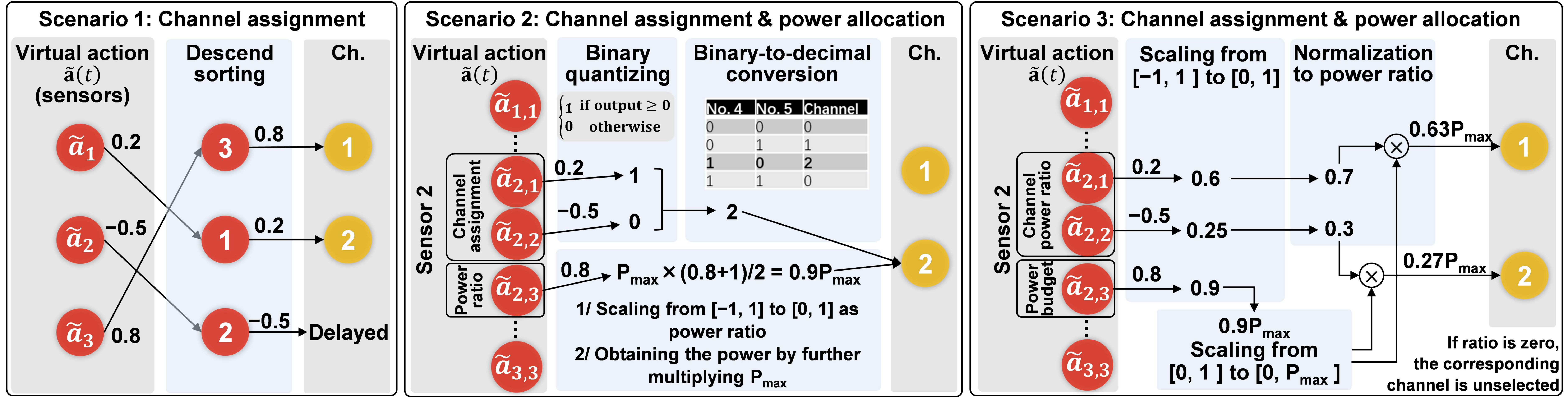}
	\vspace{-1cm}
	\caption{Action mapping schemes for DNN models of Scenarios 1-3, illustrated with a system consisting of three sensors and two frequency channels.}
	\label{fig:ActionMapping}
	\vspace{-0.8cm}
\end{figure}

\subsubsection{Scenario 3}
The power allocation MDP has continuous actions. We design the virtual action of each sensor with $M+1$ dimensions as $[\tilde{a}_{n,1},\dots, \tilde{a}_{n,M+1}]$, where $\tilde{a}_{n,m}\in \mathbb{R}$. The virtual action $\tilde{a}_{n,M+1}$ indicates sensor $n$’s total power ratio to be allocated in $M$ channels and $\tilde{a}_{n,m}$ represents the allocation ratio at channel $m$. Specifically, the total transmit power is obtained as
\begin{equation}
    P^{\mathrm{tx}}_n =\frac{\operatorname{clamp_{-1}^1}[\tilde{a}_{n,M+1}]+1}{2} P_{\text{max}},
\end{equation}
where the operator $\operatorname{clamp_{-1}^1}[\cdot]$ is for truncating a variable to values between $-1$ and $1$.
Then, based on the normalization, sensor $n$’s transmit power at channel $m$ is
\begin{equation}
    P^{\mathrm{tx}}_{n,m} = \frac{(\operatorname{clamp_{-1}^1}[\tilde{a}_{n,m}]+1)/2}{\sum_{m=1}^{M} \left(\left(\operatorname{clamp_{-1}^1}[\tilde{a}_{n,m}]+1\right)/2\right)}P^{\mathrm{tx}}_n.
\end{equation}
It is easy to see that the virtual action has $N(M+1)$ actions in total, and thus the actor NN has $2N(M+1)$ outputs, i.e., a linear function of either $N$ or $M$.

\subsubsection{Scenario 2} \label{sec:ActionMapping2}
Considering the hybrid action feature of Scenario 2, it is natural to design the continuous virtual action for sensor $n$ into two parts for channel selection and power allocation. However, how to design a continuous virtual action and map it to a discrete channel selection action for effective DRL is highly non-trivial. An easy-to-think method is to adopt a scaler continuous virtual action for each sensor and discrete it linearly into $M+1$ levels for channel selection. Level 0 means no channel selection and Level $m$ denotes the selection of channel $m$. At the first glance, the method compresses the size-$M$ discrete action space by a one-dimensional continuous one, which is space reducing. However, the method implicitly converts the original non-Euclidean action space into a Euclidean one. This potentially requires the actor NN to approximate a highly discontinuous multi-level-multi-stage piecewise function. Ideally, a slight change of state can make a significant change in the actor NN output \cite{Moarales2020Book}. This makes the PPO policy difficult to converge to an optimal one.

To solve the issue of discontinuous function approximation, we propose to convert the decimal channel selection action (i.e., from $0$ to $M$) of each sensor to a binary sequence of length $\lceil \log_{2}(M+1) \rceil$. Thus, we design the virtual action for each sensor with $\lceil \log_{2}(M+1) \rceil $ outputs and the positive and negative entries are mapped to ‘1’ and ‘0’, respectively, for real action mapping. Now, there are $N \lceil \log_{2}(M+1) \rceil$ actions in total for channel selection of all sensors. Each of these virtual action entries only handles the selection of two discrete actions. Changing of a single virtual action entry from negative to positive represents the reduced and the increased likelihood of one action and the other, respectively. Thus, the actor NN is continuous in each output dimension, and is easy for training. The original size-$(M+1)^N$ discrete action space is compressed by the $N \lceil \log_{2}(M+1) \rceil $-dimension continuous one, which scales linearly and logarithmically in $N$ and $M$, respectively.

Same as Scenario 3, one-dimensional virtual action output $\tilde{a}_n \in \mathbb{R}$ is added for each sensor’s transmit power control. The mapping between the virtual to real power control action is
$P^{\mathrm{tx}}_n =P_{\max} (\operatorname{clamp_{-1}^1}[\tilde{a}_n]+1)/2$. Thus, the actor NN has $2N \lceil \log_{2}(M+1) \rceil $ outputs in total.

\subsubsection{Scenario 1}
Compared to Scenario 2, Scenario 1 has a stricter constraint on the channel selection actions, i.e., different sensors cannot choose the same channel. Thus, the virtual action design method for Scenario 2, which assigns virtual action outputs for different sensors separately, cannot be applied to Scenario 1, which requires joint representation of all sensor actions.

We design the continuous virtual action with $N$ dimensions, where the entries are associated with sensor 1 to $N$, respectively. For the virtual action to real action mapping, we sort the virtual actions in descending order. The sensors associated with the largest $M$ virtual action entries are assigned to channels $1$ to $M$, sequentially. Therefore, we compress the original size-$N!/(N-M)!$ discrete action space with the $N$-dimension continuous one. The actor NN has $2N$ outputs.

\section{Numerical Experiments} \label{sec:simulation}
In the following, we evaluate and compare the performance of the DRL-based resource allocation algorithms of Section~\ref{sec:DRL}.

\subsection{Experiment Setups}
Our numerical experiments are implemented using  MATLAB 2021b based on the computing platform with two Intel Xeon Gold 6256 CPUs @ 3.60 GHz and a 192 GB RAM. GPU is not required. 
Each of the actor and critic NNs of the PPO-based DRL agent has three hidden layers with sizes of $\left\lceil70K\right\rceil, \left\lceil50K\right\rceil,\left\lceil30K\right\rceil$, respectively, where $K=\sqrt{N/M}\log _{2}(M+1)$. The state input of each NN has $N(M+1)$ dimensions, which is the same as the MDP. The output size of the critic NN is $1$, while that of the actor NN is discussed in Section~\ref{sec:ActionMapping}. The dynamic system matrices $\mathbf{A}_n$ are randomly generated by leveraging the method presented in \cite{Leong2020OMA}, where the spectrum radius is drawn uniformly from the range of $(1, 1.3)$. 
The channel transition matrices are generated randomly.
Only the system matrices and the channel transition matrices that satisfy the sufficient stability condition established in Section~\ref{sec:stability} are adopted in the experiment.
The details of the remote estimation system parameters and the DRL parameters are summarized in Table~\ref{tab:Setup}.

\begin{table}[t]
\footnotesize
\setlength\tabcolsep{0.5pt}
\centering
\caption{Summary of Experimental Setup}
\vspace{-0.5cm}
\begin{tabular}{cc|cc}
\hline\hline
Items   & Value     & Items     & Value   \\ \hline
\multicolumn{2}{l|}{\textbf{Remote state estimation system parameters}} & \multicolumn{2}{l}{\textbf{Agent parameters of the benchmark (DQN)}} \\
\rowcolor[HTML]{EFEFEF} 
Transmit power budget {[}dBm{]}, $P_{\max}$    & 23      & Initial epsilon for exploring action space    & 1        \\
Background noise power {[}dBm{]}, $\sigma^2$    & $-$60     & Epsilon decay rate      & 0.999 \\
\rowcolor[HTML]{EFEFEF} 
Code rate {[}bps{]}, $b/l$    & 2        & Minimum epsilon  & 0.01      \\
Block length {[}symbols{]}, $l$    & 200     & Mini-batch size, $B$   & 128   \\
\rowcolor[HTML]{EFEFEF} 
Channel states for Scenario 1 \& 2, $\mathcal{G}$  &  $\{10^{-8},10^{-7},\cdots,10^{-1}\}$  & Experience buffer length   & $1000NM$    \\
Channel states for Scenario 3, $\mathcal{G}$   & $\{\sqrt{10^{-8}},\sqrt{10^{-7}},\cdots,\sqrt{10^{-1}}\}$  & Target critic network update frequency  & 24      \\
\rowcolor[HTML]{EFEFEF} 
Channel state transition matrix  & Randomly generalized & Discount factor, $\beta$     & 0.95    \\\hline
\multicolumn{2}{l|}{\textbf{Traning parameters}}                        & \multicolumn{2}{l}{\textbf{Agent parameters of this work (PPO)}}      \\
\rowcolor[HTML]{EFEFEF} 
Episode number, $E$    &  $\left\lceil 250 \times \frac{N}{M} \times \sqrt{NM} \right\rceil$   & Number of epochs for learning, $K$     & 3     \\
Maximum time steps per episode, $T$     & 128      & Generalized advantage estimator factor, $\alpha$   & 0.95    \\
\rowcolor[HTML]{EFEFEF} 
Learning rate of actor network (PPO)    & 0.0001   & Mini-batch size, $B$    & 128      \\
Learning rate of critic network     & 0.001    & Clip factor, $\omega$   & 0.2      \\
\rowcolor[HTML]{EFEFEF} 
Optimizer during learning     & Adam      & Discount factor, $\beta$      & 0.95       \\
Threshold value of the learning gradient & 1 & Entropy loss weight factor, $w$  & 0.01   \\ \hline\hline
\end{tabular}
\label{tab:Setup}
\vspace{-0.5cm}
\end{table}

We adopt the DQN algorithm as the benchmark policy of Scenario~1 with the OMA scheme. 
As shown in~\cite{Leong2020OMA}, the DQN-based algorithm performs better than the heuristic algorithms (e.g., the greedy and the round-robin policies). Thus, we only need to compare our algorithm with the DQN-based one. 
For Scenario~2, we use the naive action mapping scheme-based PPO (discussed in Section~\ref{sec:ActionMapping}) as the benchmark.
The DRL-based resource allocation algorithm in Scenario~3 is then compared with Scenarios~1 and~2.

\subsection{Performance Evaluation of the DRL Algorithms}
In Table~\ref{tab:scalability}, we compare the remote estimation performance, i.e., the average sum MSE of all plants, among the proposed DRL-based algorithms in Scenarios 1-3 and the benchmark policies with various system scales and sensor-to-channel ratios (SCRs).
The remote estimation system with a smaller average estimation MSE has a better  performance.
Average estimation MSEs in Table~\ref{tab:scalability} are calculated by $10000$-step simulations.
\begin{table}[t]
\begin{minipage}{.6\linewidth}
    \footnotesize
    \setlength\tabcolsep{3pt}
    \centering
    \caption{Performance Comparison of the Proposed Algorithm and the Benchmarks in Terms of Averaged Estimation MSE}
    \vspace{-0.5cm}
    \label{tab:scalability}
\begin{tabular}{c|cc|cc|c}
\hline\hline
System scale & \multicolumn{2}{c}{Scenario 1} \vline & \multicolumn{2}{c}{Scenario 2} \vline & Scenario 3 \\ \hline
($N,M$, SCR) & DQN & This work & \thead{Naive action \\ mapping} & This work & This work \\ \hline
\rowcolor[HTML]{EFEFEF} 
(6, 3, 2) & 46.6243 & 44.7455 & 39.0462 & 38.0663 & 45.8611 \\
(10, 5, 2) & $-$ & 73.3188 & 65.0766 & 63.0768 & 50.5058 \\
\rowcolor[HTML]{EFEFEF} 
(20, 10, 2) & $-$ & 158.4659 & 154.9627 & 125.4118 & 90.5851 \\
(30, 15, 2) & $-$ & 269.2111 & 297.5236 & 218.3914 & 134.0544 \\
\rowcolor[HTML]{EFEFEF} 
(40, 20, 2) & $-$ & 403.6164 & 397.5561 & 286.1228 & 175.6329 \\
(50, 25, 2) & $-$ & 525.6877 & 486.9513 & 360.9214 & 238.1790 \\ \hline
\rowcolor[HTML]{EFEFEF} 
(10, 4, 2.5) & $-$ & 83.3288 & 78.8600 & 77.6075 & 64.2634 \\
(20, 8, 2.5) & $-$ & 196.6553 & 172.9613 & 157.2169 & 94.2942 \\
\rowcolor[HTML]{EFEFEF} 
(30, 12, 2.5) & $-$ & 69.2031 & 318.8591 & 252.8929 & 141.1223 \\
(40, 16, 2.5) & $-$ & 568.8016 & 404.8909 & 345.0570 & 178.3883 \\
\rowcolor[HTML]{EFEFEF} 
(50, 20, 2.5) & $-$ & 35.9785 & 563.4724 & 434.4692 & 239.0142 \\ \hline
(15, 5, 3) & $-$ & 183.0668 & 159.5629 & 129.8431 & 82.3815 \\
\rowcolor[HTML]{EFEFEF} 
(24, 8, 3) & $-$ & 304.3899 & 275.5904 & 230.7575 & 111.8687 \\
(33, 11, 3) & $-$ & 524.6109 & 438.7793 & 343.1833 & 157.6327 \\
\rowcolor[HTML]{EFEFEF} 
(42, 14, 3) & $-$ & 731.5286 & 578.4324 & 448.6448 & 193.3498 \\
(51, 17, 3) & $-$ & 1066.5461 & 765.4839 & 592.1996 & 246.8475 \\ \hline\hline
\end{tabular}
\end{minipage}\hfill
\begin{minipage}{.4\linewidth}
    \footnotesize
    \setlength\tabcolsep{1pt}
    \centering

    \caption{Computation Time of the Proposed Algorithms for Decision Making in a Resource Allocation Cycle (Unit: ms)}
    \label{tab:computation}
    \vspace{-0.5cm}
\begin{tabular}{cccc}
\hline\hline
\thead{System scale \\ ($N,M$, SCR)} &Scenario 1 & Scenario 2 & Scenario 3 \\ \hline
\rowcolor[HTML]{EFEFEF} 
(6, 3, 2)  & 0.8665 & 0.8889 & 0.8155 \\
(10, 5, 2) & 0.8660 & 0.8977 & 0.8685 \\
\rowcolor[HTML]{EFEFEF} 
(20, 10, 2) & 0.9633 & 1.0232 & 1.1534 \\
(30, 15, 2) & 1.0773 & 1.2392 & 1.7283 \\
\rowcolor[HTML]{EFEFEF} 
(40, 20, 2) & 1.2935 & 1.4748 & 1.7087 \\
(50, 25, 2) & 1.6462 & 1.8995 & 2.0258 \\ \hline
\rowcolor[HTML]{EFEFEF} 
(10, 4, 2.5) & 0.8521 & 0.9293 & 0.9190 \\
(20, 8, 2.5) & 1.0042 & 1.0821 & 1.0263 \\
\rowcolor[HTML]{EFEFEF} 
(30, 12, 2.5) & 1.1423 & 1.2412 & 1.3685 \\
(40, 16, 2.5) & 1.3768 & 1.4548 & 1.7103 \\
\rowcolor[HTML]{EFEFEF} 
(50, 20, 2.5) & 1.5651 & 1.8333 & 2.0271 \\ \hline
(15, 5, 3) & 0.9099 & 0.9721 & 0.8769 \\
\rowcolor[HTML]{EFEFEF} 
(24, 8, 3) & 1.0567 & 1.0564 & 0.9209 \\
(33, 11, 3) & 1.2135 & 1.3257 & 1.3358 \\
\rowcolor[HTML]{EFEFEF} 
(42, 14, 3) & 1.4302 & 1.5020 & 1.4890 \\
(51, 17, 3) & 1.5488 & 1.9323 & 2.1405 \\ \hline\hline
\end{tabular}
\end{minipage}
\vspace{-0.8cm}
\end{table}

We see that the DQN algorithm only works for the 6-sensor-3-channel setting, while all the proposed DRL algorithms can scale up to 50 sensors and 25 channels.
In Scenario 2 (i.e., the conventional NOMA scheme), we see that the proposed action mapping scheme-based algorithm can provide a 25\% average estimation error reduction than the naive mapping scheme-based one.

It can be observed that the optimized resource allocation policy with the proposed NOMA scheme (Scenario 3) outperforms the conventional NOMA scheme, which performs better than the OMA scheme (Scenario 1) in general. The performance gains of the proposed NOMA scheme over the conventional NOMA increase significantly with the increased system scale. In particular, the proposed NOMA scheme can reduce the average estimation error by 50\% in some large settings, e.g., $N=51$ and $M=17$.
In Fig.~\ref{fig:TriScene}, we plot the training curves of the proposed DRL algorithms in Scenarios 1-3 with $N=51$ and $M=17$.
It is interesting to see that Scenario 3 not only provides the best training performance but also requires the shortest time for convergence among Scenarios 1-3.
This indicates that the proposed NOMA scheme is preferable to the conventional OMA and NOMA schemes, especially in large-scale remote estimation systems.

In Table~\ref{tab:scalability}, we see that the estimation performance decreases with the increasing SCR as expected, due to the decreasing wireless resources. In Fig.~\ref{fig:TriSCR}, we also plot the training processes with different SCR in Scenario 1. It can be observed that a larger SCR induces a less stable training process. 

\begin{figure}[t]
	\centering
	\subfigure[]{
		\begin{minipage}[b]{0.47\textwidth}
			\includegraphics[width=1\textwidth]{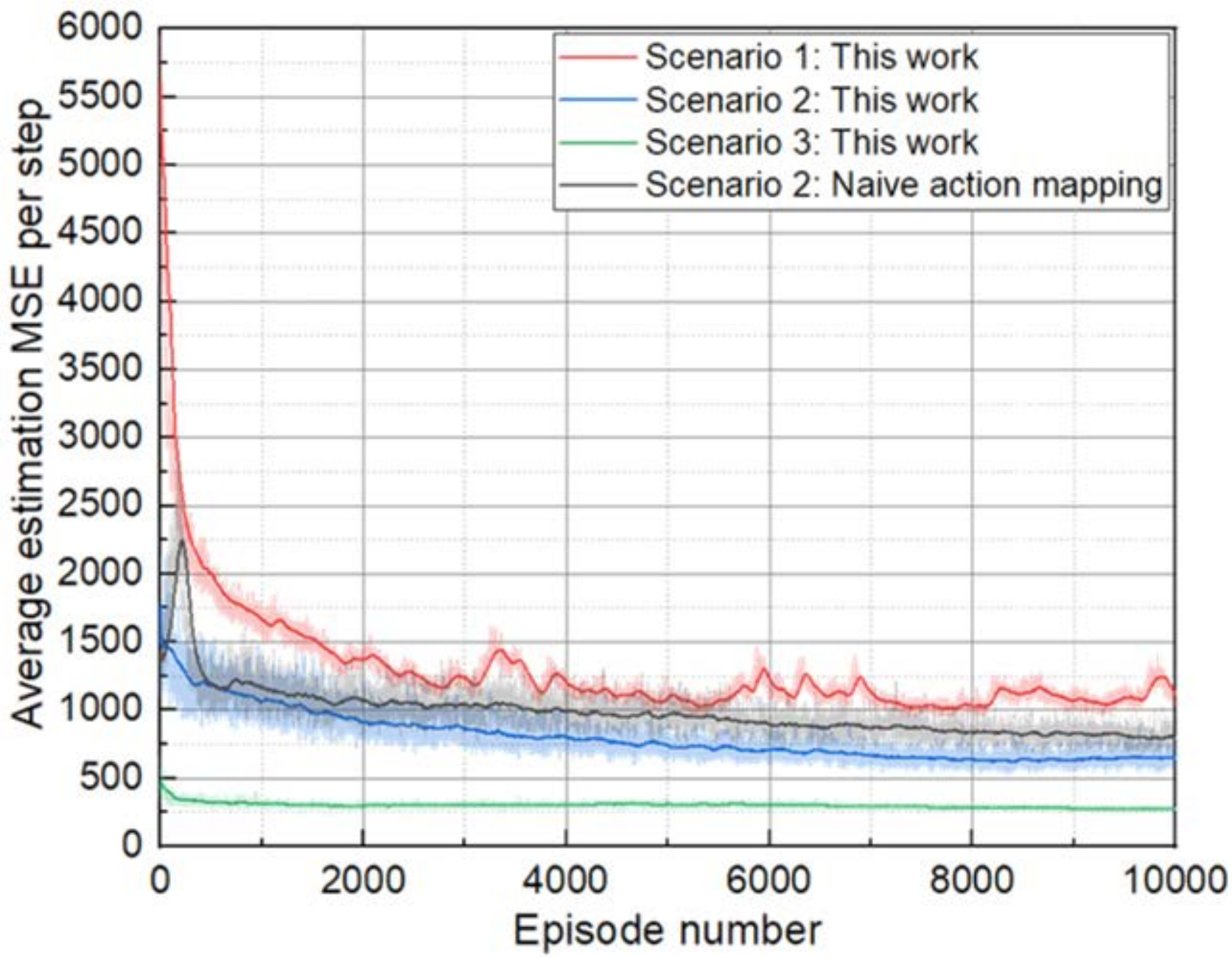}
		\end{minipage}
		\label{fig:TriScene}
	}
    	\subfigure[]{
    		\begin{minipage}[b]{0.47\textwidth}
   		 	\includegraphics[width=1\textwidth]{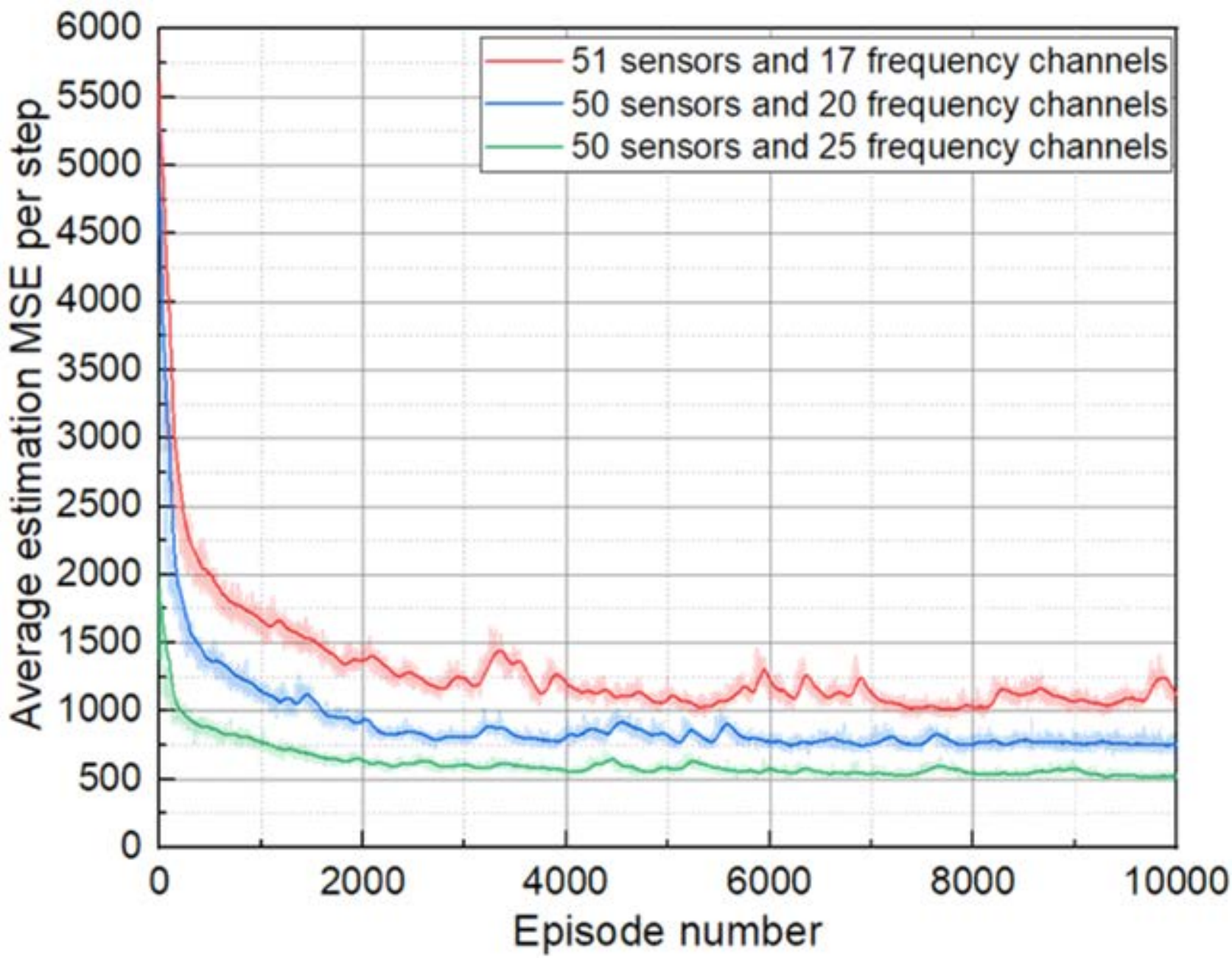}
    		\end{minipage}
		\label{fig:TriSCR}
    	}
    \vspace{-0.5cm}
	\caption{Traning process of proposed PPO-based DRL. (a) The training process of the proposed PPO-based DRL methods for Scenarios 1-3 with a system consisting of 51 sensors and 17 frequency channels. (b) The training process of the proposed PPO-based DRL methods for Scenario 1 with the systems with different SCRs.}
	\label{fig:TraningCurve}
	\vspace{-0.8cm}
\end{figure}

\subsection{Computation Complexity for Online Deployment}
For online deployment, only the well-trained actor NNs are adopted for resource allocation.
In Table~\ref{tab:computation}, we compare the average computation time of the NNs obtained by the proposed DRL-based algorithms in each resource allocation cycle.
We see that the computation time grows up with the increasing system scale in each scenario. 
In particular, it can be observed that the actor NNs in Scenario 3 and 1 have the highest and the lowest computation complexity, respectively. Together with Table~\ref{tab:scalability}, it is interesting to see that the proposed NOMA-based implementation can provide a 50\% performance improvement while consuming only 10\% more computation resources than the conventional NOMA-based implementation in large remote estimation systems (e.g., $N=51$ and $M =17$).

\section{Conclusions}\label{sec:conclusion}
We have proposed practical remote estimation systems over the conventional OMA and NOMA schemes and the proposed NOMA scheme. We have derived the stability conditions of the systems and developed advanced DRL algorithms for resource allocation with large state and action spaces. Our experiments have shown that the proposed algorithms can effectively solve the resource allocation problems. Furthermore, the optimized resource allocation policy with the proposed NOMA scheme has significantly improved estimation performance compared to the conventional OMA and NOMA scenarios. For future work, we will investigate distributed DRL algorithms for resource allocation of large-scale systems and compare them with the present centralized allocation schemes.

\begin{appendices}
\renewcommand{\thesectiondis}[2]{\!\!:}
\section{Proof of Proposition~\ref{prop:stability}}

Due to the space limitation, we only present a brief proof building on our previous works~\cite{Liu2022Stability1,Liu2021FSMC}.

(a) To prove the sufficiency, we construct a
scheduling policy with the OMA scheme and then derive the sufficient condition under which the remote estimator is stabilized by the policy. 
We note that since the NOMA schemes performs better performance than OMA, there exists channel allocation and power allocation policies to stabilize the remote estimator with NOMA if the system can be stabilized with OMA.

For tractability, we construct a persistent scheduling policy, which persistently schedules sensor $1$'s transmission until it is successful and then schedules sensor $2$ and so on.
We adopt the estimation cycle-based analytical approach we developed earlier~\cite{Liu2022Stability1},
where the infinite time horizon is divided by the sensors' successful transmission events as remote estimation cycles.
Then, the average estimation MSE in the infinite time horizon can be converted as the average sum MSE per cycle. 
By following the similar analytical steps, we first derive an upper bound of the average sum MSE per cycle, and then we can show that the average sum MSE per cycle is bounded once \eqref{eq:suf} holds.

(b) To prove the necessity, we have to construct a
virtual resource allocation policy, which always performs better
than any real policy at any multiple-access scheme.
If we can find the necessary condition making the remote estimator stable under the virtual policy, the condition is still necessary for any real policy that stabilizes the system.

For tractability, it is convenient to consider a virtual system, where only one of the sensors needs to be
scheduled for transmission and all other sensors always have perfect transmissions without using any
of the channels (i.e., with zero communication
overhead). Thus, the selected sensor is scheduled
for transmission at each time and uses the best channel with the highest channel power gain. In this case, since only one
of the $N$ sensors has packet detection failures, the remote estimation of all other plants are stabilized.
Then, the necessary stability
condition of the single-plant remote estimation system can be obtained based on our previous work~\cite{Liu2021FSMC}. Therefore, it is easy to derive the overall necessary stability condition~\eqref{eq:nec} considering the selection of the $N$ different sensors.

\end{appendices}

    \balance
    
	\ifCLASSOPTIONcaptionsoff
	\newpage
	\fi


\end{document}